\renewcommand\footnotetextcopyrightpermission[1]{}
\newcommand{\modelname}{PASRec}
\begin{document}
\title{
When and What to Recommend: Joint Modeling of Timing and Content for Active Sequential Recommendation
}

\author{Jin Chai}
\affiliation{%
  \institution{School of Computing, Macquarie University}
  \city{Sydney}
  \state{NSW}
  \country{Australia}
}
\email{jin.chai@hdr.mq.edu.au}

\author{Xiaoxiao Ma}
\affiliation{%
  \institution{Oracle Analytics Cloud, Oracle Corporation}
  \city{Sydney}
  \state{NSW}
  \country{Australia}
}
\email{xiaoxiao.ma@oracle.com}

\author{Jian Yang}
\affiliation{%
  \institution{School of Computing, Macquarie University}
  \city{Sydney}
  \state{NSW}
  \country{Australia}
}
\email{jian.yang@mq.edu.au}

\author{Jia Wu}
\affiliation{%
  \institution{School of Computing, Macquarie University}
  \city{Sydney}
  \state{NSW}
  \country{Australia}
}
\email{jia.wu@mq.edu.au}

\begin{abstract}
Sequential recommendation aims to model dynamic user preferences from historical interaction sequences and recommend the next item that a user may be interested in.
Most existing works focus on \textit{passive} recommendation, where the system responds only when users actively open the application. Such approaches fail to deliver items of potential interest once the application has been closed. 
Instead, we investigate \textit{active} recommendation, which predicts the user’s next interaction time and actively delivers high-quality recommendations accordingly.
However, active recommendation faces two crucial challenges: (1) Accurately estimating the next interaction time, referred to as the Time of Interest (ToI), as precise timing is essential to avoid delivering recommendations at improper moments or being perceived as spam; and (2) Generating high-quality Item of Interest (IoI) recommendations that are conditioned on the predicted ToI to ensure relevance at the time of delivery.
A typical approach is to first predict the ToI to determine when the user is likely to interact, and then generate the corresponding IoI for recommendation. But, inaccurate ToI prediction induces a single point of failure that propagates errors to IoI estimation and degrades the overall recommendation quality.
We bridge these gaps by proposing Personalized Time of Interest For Active Sequential Recommendation (\modelname), a diffusion-based sequential recommendation framework that aligns ToI and IoI predictions via a tailored training objective, as supported by our theoretical analysis. We further show that \modelname\ is theoretically lower bounded by traditional diffusion-based recommenders.
Through extensive experiments on five widely-used benchmark datasets, we validate the superiority of \modelname\ over eight state-of-the-art baseline models under both leave-one-out and temporal split settings. Further empirical studies on ToI and IoI predictions support our motivation and demonstrate that the performance gain stems from the accurate modeling of both. 

\end{abstract}
\maketitle

\section{Introduction}
\label{sec_Introduction}

\begin{figure}[ht]
    \centering
    \includegraphics[width=\columnwidth]{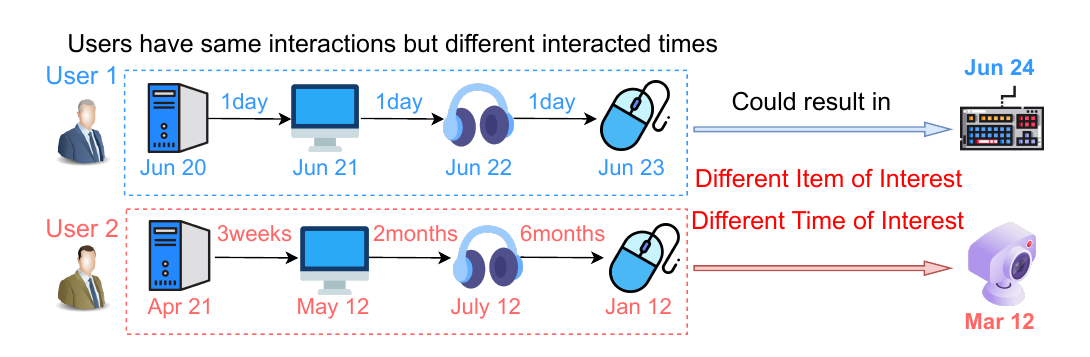}
    \caption{Two users share the same sequence of interactions, but their different interaction times result in distinct next item and time of interest.}
    \label{fig:intro}       
\end{figure}

Sequential recommendation aims to recommend items that best align with users' personalized preferences based on their historical interaction sequences~\cite{pan2024survey, wei2025sequential, li2025dimerec}. 
Existing methods typically operate \textit{passively}, meaning the system only recommends items when users are browsing the application. This approach cannot actively attract users once the application is closed, leading to missed opportunities to deliver potential recommendations and boost revenue. These limitations highlight the need for \textit{active recommendation}, where the system can actively push recommendations~\cite{loni2019personalized}.

For active recommendation scenarios, the accurate prediction of both the Time of Interest (ToI) and the Item of Interest (IoI) is essential, yet presents two major challenges. First, the recommender must precisely estimate the appropriate time to push recommendations in order to avoid sending them at improper moments, which can degrade user experience or be perceived as spam. Second, the recommended items must align with the user’s preferences at the predicted time to ensure relevance and engagement.
This can be formulated as predicting the joint distribution of ToI and IoI, i.e., \( p(e,\tau \mid {s}_u) \). Here, \(\tau\) denotes the ToI, \(e\) is the item to be recommended, and \({s}_u\) represents the user's historical interaction sequence. We can naturally decompose this joint distribution into \( p(e,\tau \mid {s}_u) =p(\tau \mid {s}_u) \cdot p(e \mid \tau, {s}_u)\). The first term \( p(\tau \mid {s}_u) \) represents a precise ToI prediction that enables the model to determine the optimal moment to deliver recommendations. The second term \( p(e \mid \tau, {s}_u) \) denotes the predicted distribution of items.

It is obvious from the second term that user preferences are dependent on time, i.e., time-sensitive~\cite{rahmani2023incorporating}.
In real scenarios, as shown in Figure~\ref{fig:intro}, two users with identical item interaction sequences may exhibit different IoI due to variations in interaction timing. User 1 completed all actions within a week, indicating an urgent intent to assemble a full desktop setup, and is likely to be interested in a keyboard at a near ToI. User 2 completed the same actions but over several months, suggesting occasional upgrades for remote work, making a webcam a more probable next item at a later ToI.

\begin{table*}[ht]
\centering
\caption{Comparison of Temporal and Active Recommendation Capabilities Across Sequential Recommendation Models}
\small
\label{tab:CompareTransposed}
\resizebox{\textwidth}{!}{
\begin{tabular}{lcccccccccc}
\toprule
\textbf{Property} & \textbf{\modelname} & GRU4Rec & SASRec & BERT4Rec & TiSASRec & MEANTIME & TASER & DiffuRec & DreamRec & PreferDiff \\

\midrule
\textbf{Temporal Modeling} & \textcolor{green!60!black}{\ding{51}} & \textcolor{red}{\ding{55}} & \textcolor{red}{\ding{55}} & \textcolor{red}{\ding{55}} & \textcolor{green!60!black}{\ding{51}} & \textcolor{green!60!black}{\ding{51}} & \textcolor{green!60!black}{\ding{51}} & \textcolor{red}{\ding{55}} & \textcolor{red}{\ding{55}} & \textcolor{red}{\ding{55}} \\
\textbf{Encoding Strategy} & Real & \textcolor{red}{\ding{55}} & A & A & Rl + A & Rl + A & Rl + A & A & A & A \\
\textbf{Next ToI Information}       & \textcolor{green!60!black}{\ding{51}} & \textcolor{red}{\ding{55}} &  \textcolor{red}{\ding{55}} & \textcolor{red}{\ding{55}} & \textcolor{red}{\ding{55}} & \textcolor{red}{\ding{55}} & \textcolor{red}{\ding{55}} & \textcolor{red}{\ding{55}} & \textcolor{red}{\ding{55}} & \textcolor{red}{\ding{55}} \\
\textbf{Active Recommendation}       & \textcolor{green!60!black}{\ding{51}} & \textcolor{red}{\ding{55}} &  \textcolor{red}{\ding{55}} & \textcolor{red}{\ding{55}} & \textcolor{red}{\ding{55}} & \textcolor{red}{\ding{55}} & \textcolor{red}{\ding{55}} & \textcolor{red}{\ding{55}} & \textcolor{red}{\ding{55}} & \textcolor{red}{\ding{55}} \\
\bottomrule
\end{tabular}
}
\footnotesize{Note: \textbf{Temporal Modeling} indicates whether the model encodes real interaction timestamps. 
\textbf{Encoding Strategy} describes how temporal or positional information is utilized: 
\textit{A} for encoding absolute positions, 
\textit{Rl} for encoding relative time intervals, 
and \textit{Real} for encoding actual timestamps.  
\textbf{Next ToI Information} indicates whether possible interaction time of next target IoI is considered.
\textbf{Active Recommendation} indicates whether the model can actively deliver recommendations to users.
}
\end{table*}

One may argue that most traditional sequential recommenders can be straightforwardly extended for active recommendation by predicting the ToI first and then predicting IoI. But this naive approach faces a potential single point of failure: First, if the ToI prediction is inaccurate, recommendations could be delivered at inappropriate times, leading users to perceive them as spam. Second, by the nature of \( p(e \mid \tau, {s}_u) \), the accuracy of ToI prediction inherently affects the accuracy of the recommended items, determining how well they capture and reflect users' personalized preferences.

To address these problems, we propose Personalized Time of Interest For Active Sequential Recommendation (\modelname), which adopts a diffusion model as the foundation to model the complex latent distributions of user preferences~\cite{lin2024survey} and incorporates the generated user representations for ToI and IoI prediction. 
With our specially designed training objectives, \modelname: 1) is capable of maximizing the mutual information between historical interaction time and items, which inherently minimizes the risk of single point failure from inaccurate ToI prediction; and 2) generalizable upon our theoretical analysis on the evidence lower bound (ELBO). Specifically, \modelname\ achieves a tighter ELBO compared to traditional diffusion-based sequential recommenders subject to the accuracy of the predicted ToI. 
Our main contributions are:
\begin{itemize}[leftmargin=*,itemsep=0.5pt,topsep=2.5pt]
    \item To the best of our knowledge, \modelname\ is the first to jointly model users' historical time of interest and item of interest for sequential recommendation. This empowers active recommendation to prioritize items according to users' personal preferences in the future, thereby mitigating the risk of spamming and irrelevant promotional pushes.

    \item \modelname\ and its training objectives are specially designed to reinforce the dependency between ToI and IoI, as motivated by our formal analysis. By maximizing their mutual information, \modelname\ effectively learns when and what to recommend and delivers high-quality recommendations actively with regard to the specific time of interest, justified by its tighter ELBO. 
    
    \item Extensive experiments on five widely-used datasets validate \modelname's consistent superiority over eight state-of-the-art baseline methods under both leave-one-out and temporal data splits. Further ablation and case studies highlight the contribution of each key component to the overall performance.
\end{itemize}

\section{Related Works}

In this section, we first review state-of-the-art sequential recommendation (SR) methods, focusing on how user preferences are modeled from historical interaction sequences. We then analyze how temporal information is incorporated, identifying key limitations and motivating the need to jointly model the ToI and IoI.

\subsection{Sequential Recommendation}
Many effective approaches have been proposed to extract and model user preferences to accomplish the SR task~\cite{huang2025survey}. SASRec employs a Transformer architecture to extract a vector embedding that represents user preferences from the historical interaction sequence~\cite{kang2018self}. Similarly, BERT4Rec adopts the same idea but utilizes a bidirectional Transformer encoder~\cite{sun2019bert4rec}. Both methods directly encode user preferences as a single vector and compute the similarity between this vector and the embeddings of candidate items to recommend. However, this straightforward approach can be limited in capturing multiple latent aspects and the diversity of user interests, as representing user preferences with a single embedding vector may oversimplify the complex and multifaceted nature of user behavior~\cite{li2023diffurec}. To better capture the underlying distribution of user preferences, various generative models have been explored, including Generative Adversarial Networks~\cite{jin2020sampling}, Variational Autoencoders~\cite{liang2018variational}, and Diffusion Models~\cite{dhariwal2021diffusion}. Among them, diffusion-based frameworks have shown the most promising results for SR tasks, due to their superior ability to model complex data distributions and enable efficient inference~\cite{wei2025diffusion, lin2024survey}. However, most existing diffusion-based approaches~\cite{li2023diffurec, yang2023generate, mao2025distinguished, liu2024preference} focus solely on modeling user preferences to predict the next IoI, while overlooking the role of temporal signals. In particular, they neglect both past interaction times and the potential next time of the user's interest, which may provide valuable information for capturing personalized user preferences.

\subsection{Temporal Information for SR}
Temporal information is vital in SR tasks, as the timestamps of user-item interactions convey informative cues about user preferences~\cite{dang2023uniform,fan2024tim4rec,li2020time}. Transformer-based models such as SASRec~\cite{kang2018self}, BERT4Rec~\cite{sun2019bert4rec}, DreamRec~\cite{yang2023generate} and PreferDiff~\cite{liu2024preference} adopt learnable absolute positional embeddings to encode the order of items within user's historical interaction sequence. However, relying solely on absolute positional encodings can only capture the sequential ordering information~\cite{li2020time}. This simplification overlooks the fact that variations in interaction timing can reflect distinct user interests. 

To address this limitation, TiSASRec extends the SASRec framework by refining the attention mechanism with a temporal relation matrix, along with both absolute positional and relative time interval embeddings~\cite{li2020time}. Following this idea, subsequent models~\cite{dang2023uniform, zhang2023time, cho2020meantime, tran2023attention, ye2020time, cho2021learning} have been proposed to capture time-aware user behaviors through different architectural designs and embedding strategies. But none of these methods consider predicting the user's next ToI, which is essential for actively delivering recommendations in active sequential recommendation scenarios~\cite{pan2024proactive, bi2024proactive}.

\begin{figure*}[ht]
    \centering
    \includegraphics[width=\textwidth]{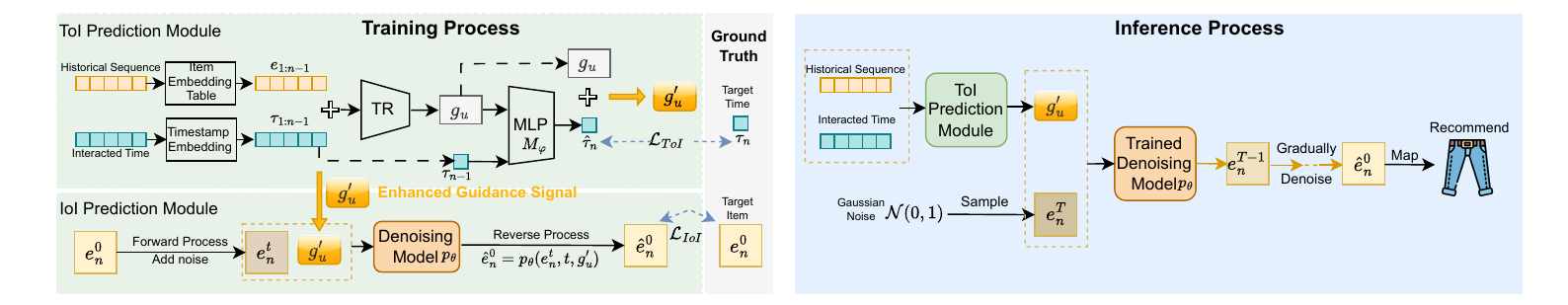}  
    \caption{The framework of \modelname. The left part illustrates how the model jointly trains the ToI and IoI prediction modules. The right part shows how the well-trained model infers the next item of interest based on the user’s historical interaction sequence and the corresponding real interaction times.}  
    \label{fig:framework}
\end{figure*}

As summarized in Table~\ref{tab:CompareTransposed}, most existing methods either rely solely on absolute positional encodings or combine them with relative time intervals, yet none directly utilize actual interaction timestamps. Moreover, to the best of our knowledge, no prior work considers the appropriate timing for delivering recommendations or jointly models the relationship between the next ToI and IoI to support active sequential recommendation.

In this work, we adopt the functions proposed in~\cite{zheng2021rethinking} to encode real timestamps from users' historical interaction sequences into a diffusion-based framework. We introduce a ToI prediction module that estimates the user's next ToI and enriches the user representation with time-dependent information. The IoI prediction module then uses this enriched representation as a guidance signal to guide the diffusion process in generating the IoI. Built upon PreferDiff~\cite{liu2024preference}, our model is theoretically grounded and empirically validated to show that jointly modeling ToI and IoI leads to improved active sequential recommendation performance.

\section{Preliminaries}
\subsection{Task Definition}
Let $\mathcal{U}$ and $\mathcal{V}$ denote the sets of users and items. For each user $u \in \mathcal{U}$, the chronologically ordered sequence of user-item historical interaction sequence is denoted by \(\mathcal{V}_u = [v_1, v_2, \dots, v_{n-1}]\). The sequential recommendation task aims to predict the next item \(\hat{v}_{n}\) that aligns with the user's interests. Through a standard look-up operation with a learnable item embedding table, discrete item IDs are mapped into high-dimensional vectors \( {e} \in \mathbb{R}^{d} \), forming the historical interaction sequence \( {s}_u = [{e}_1, {e}_2, \dots, {e}_{n-1}] \).

\subsection{Diffusion Models}
\subsubsection{Forward Process}
\label{subsubsec:Forward}
The diffusion model gradually adds Gaussian noise to the original target IoI embedding \( e_{n}^{0} \) following a Markov chain \( q(e_{n}^{1:T} \mid e_{n}^{0}) = \prod_{t=1}^{T} q(e_{n}^{t} \mid e_{n}^{t-1}) \), where \( t \in [1, T] \) denotes the diffusion step, and \( [\beta_1, \beta_2, \dots, \beta_T] \) represents the predefined variance schedule~\cite{ho2020denoising}. 

\subsubsection{Reverse Process} 
\label{subsubsec:Reverse}
After the forward process injects Gaussian noise into \( e_{n}^{0} \), the model samples a noisy embedding \( e_{n}^{t} \) at a random timestep \( t \), and a denoising network reconstructs \( e_{n}^{0} \), guided by the user's preference representation \( g_u \) ~\cite{liu2024preference}. Formally, the denoising function from \( e_{n}^t \) to \( e_{n}^{t-1} \) can be defined as:
\begin{equation}
p_\theta(e_{n}^{t-1} \mid e_{n}^t, g_u) = \mathcal{N}(e_{n}^{t-1}; {\mu}_\theta(e_{n}^t, t, g_u), {\Sigma}_\theta(e_{n}^t, t, g_u))
\end{equation}
where \( {\mu}_\theta(e_{n}^t, t, g_u) \) and \( {\Sigma}_\theta(e_{n}^t, t, g_u) \) are the predicted mean and covariance, respectively, parameterized by the diffusion model.

\subsubsection{Inference Process}
\label{subsubsec:inference_process}
After training, the denoising network \( p_\theta \) approximates the reverse diffusion process, 
allowing the recovery of the original data distribution \( p(e_{n}^{0}) \)~\cite{ma2024graph}. For efficient inference, we adopt Denoising Diffusion Implicit Models (DDIM)~\cite{song2020denoising}. Unlike the Markovian reverse process in DDPM, which requires all \(T'\) sequential denoising steps from \(T'\) down to 1, DDIM defines a non-Markovian process that enables accelerated sampling using a reduced set of steps:
\begin{equation}
p_\theta^{\text{DDIM}}(e_n^{0:T}) 
= p_\theta(e_n^T) 
\prod_{i=1}^{T} 
p_\theta^{\text{DDIM}}\big(e_n^{t_{i-1}} \mid e_n^{t_i}, g_u \big),
\end{equation}
where \(T \ll T'\) and \(\{t_i\}_{i=0}^{T}\) denotes the timesteps for sampling.

\section{Methodology}

\subsection{Overview of \modelname}
We introduce \modelname, which consists of two key components: a ToI prediction module and an IoI prediction module. The overall design is inspired by extracting a personalized user representation as a guidance signal and adopting the diffusion-based item generation process proposed in~\cite{yang2023generate, liu2024preference}. In the following sections, we detail the components and workflow of \modelname.

\subsection{Time Encoding Functions}
\label{subsec:TEM}
In order to extract temporal information from historical interaction sequences, we have utilized several functions to encode real timestamps. \cite{zheng2021rethinking} challenges the exclusive use of sinusoidal functions for positional encoding and suggests that other continuous basis functions, such as Gaussian kernel functions and Random Fourier Features, can also encode positional or temporal information. We treat different encoding functions as hyperparameters, allowing the model to encode real interaction timestamps, as shown in Figure~\ref{fig:EncodingFunctions}.

\begin{figure}[ht]
    \centering
    \includegraphics[width=\columnwidth]{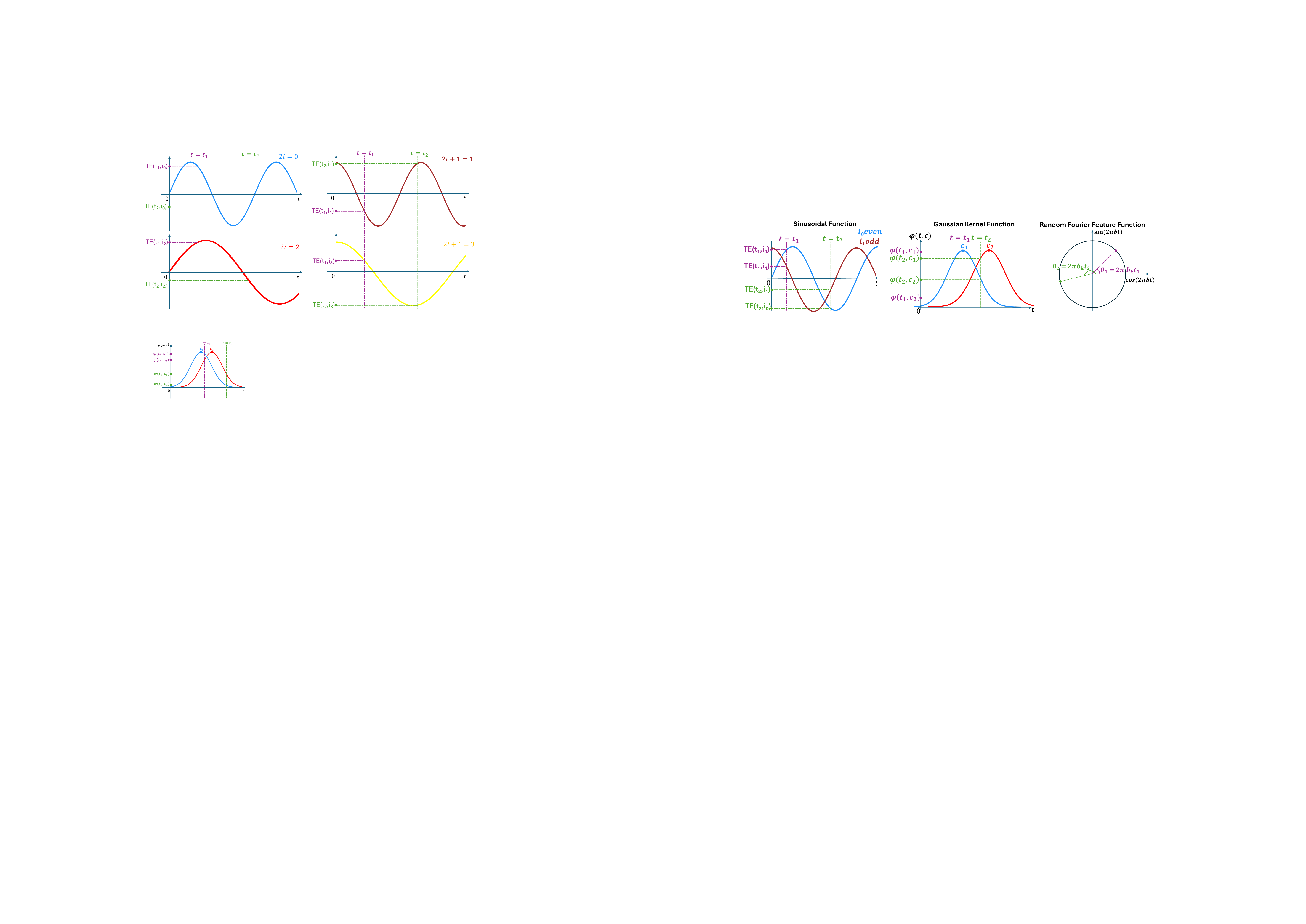}  
    \caption{Overview of Time Encoding Functions. }
    
    \label{fig:EncodingFunctions}
\end{figure}

\subsubsection{Sinusoidal Function} This idea was originally introduced in the Transformer model~\cite{vaswani2017attention} to incorporate positional information into sequence modeling. In our work, we apply a similar concept by encoding the interaction timestamps into high-dimensional embedding representations, the formula is defined as:
\begin{equation}
\begin{aligned}
\mathrm{TE}(t,2i)
&= \sin\Bigl(\tfrac{t}{freq^{\,2i/d}}\Bigr),
\mathrm{TE}(t,2i+1) = \cos\Bigl(\tfrac{t}{freq^{\,2i/d}}\Bigr) .
\end{aligned}
\end{equation}
where \( freq \) is a frequency typically set to 10000, and \( d \) is the total embedding dimension, with \( i = 0, \dots, \frac{d}{2} - 1 \).

\subsubsection{Gaussian Kernel Function}
The Gaussian embedder is defined as \( \psi(t, c) = \exp\left(-\frac{\|t - c\|^2}{2\sigma^2}\right) \), where \( \sigma \) is a tunable hyperparameter representing the standard deviation~\cite{zheng2021rethinking}. We define: 
\begin{equation}
\mathrm{TE}(t) = [\psi_0(t, c_0), \psi_1(t, c_1), \dots, \psi_{d-1}(t, c_{d-1})] 
\end{equation}
where each \( c_j \) is the center of the \( j \)-th Gaussian kernel, given by \( c_j = \frac{j}{d - 1} \) for \( j = 0, 1, \dots, d - 1 \). The Gaussian embedder constructs a d-dimensional embedding for each timestamp by applying a Gaussian kernel function to compute its similarity with a set of fixed centers.

\subsubsection{Random Fourier Feature (RFF)}
This method can employ fourier frequency to map low-dimensional timestamps into a higher-dimensional representation~\cite{zheng2021rethinking}. Specifically, \( t \) is encoded by:
\begin{equation}
\mathrm{TE}(t) = \left[ \cos\left(2\pi b_k t\right), \sin\left(2\pi b_k t\right) \right]
\end{equation}
where the frequencies \( b_k \) are independently sampled for every dimension from a Gaussian distribution \(b_k \sim \mathcal{N}(0, \sigma^2), \quad k = 1, 2, \dots, d/2\), \( \sigma \) is the tunable standard deviation. 

Through the time encoding function, the real interacted timestamps \({{t}}_u\) can be encoded into high-dimensional vector embeddings:
\begin{equation}
\text{TE}({{t}}_u) = [{\tau}_1, {\tau}_2, \dots, {\tau}_{n-1}]
\end{equation}
We then incorporate the temporal information into the user's historical interaction sequence by combining each item embedding with its corresponding time embedding, resulting in:
\begin{equation}
\tilde{{s}}_u = [e_1 + {\tau}_1, e_2 + {\tau}_2, \dots, e_{n-1} + {\tau}_{n-1}].
\end{equation}

\subsection{ToI Prediction Module}
\label{subsec:TPM}
To enrich the user representation with the predicted time of the next potential interaction, we introduce the ToI Prediction Module. After we obtain the historical sequence item embeddings \(\tilde{{s}}_u\), which have been enriched with past temporal information. A Transformer encoder (\textit{TR}) is then employed to extract the user representation, defined as \( {g}_u = \textit{TR}(\tilde{{s}}_u) \).

The ToI prediction module \( M_\varphi \) utilizes the user representation \( {g}_u \) along with the embedding of the timestamp corresponding to the most recent interaction \( \tau_{n-1} \) to predict the ToI embedding of the forthcoming IoI. Formally,
\begin{equation}
\hat\tau_{n} = M_\varphi({g}_u, \tau_{n-1})
\end{equation}
The predicted ToI embedding and the original user representation are fused through a multilayer perceptron (MLP) to produce an enriched user representation \( {g}_u' \). Formally,
\begin{equation}
\label{equation_fusion}
{g}_u' = \mathrm{MLP}({g}_u, \hat\tau_{n})
\end{equation}
This fused representation is subsequently combined with the original representation using a residual connection, updated as:
\begin{equation}
{g}_u' \leftarrow (1 - \gamma)\,{g}_u + \gamma\,{g}_u'
\end{equation}
Here, \( \gamma \) is a tunable weight that balances the contribution of the predicted ToI information to the user representation. A larger value of \( \gamma \) indicates that more predicted temporal information is integrated, whereas a smaller value implies less influence.

\subsection{IoI Prediction Module}
\label{subsec:IPM}
After extracting the enriched guidance signal \( {g}_u' \) with the ToI information as discussed in Section~\ref{subsec:TEM} and ~\ref{subsec:TPM}, \modelname\ follows the idea of adopting a conditional diffusion-based framework~\cite{yang2023generate, mao2025distinguished, liu2024preference} to train the denoising model \( p_\theta \). The denoising model, implemented as a simple MLP, is trained to reconstruct the clean target IoI embedding conditioned on the enriched guidance signal \( {g}_u' \).

First, we add Gaussian noise to the original target IoI embedding during the forward process, as we discussed in~\ref{subsubsec:Forward}:
\begin{equation}
e_{n}^{t} = \sqrt{\bar{\alpha}_t} \, e_{n}^{0} + \sqrt{1 - \bar{\alpha}_t} \, {\epsilon}, \quad t \in [1, T], \quad{\epsilon} \sim \mathcal{N}(0, {I})
\end{equation}
At each diffusion step \( t \), the objective of the denoising model is to estimate the clean target IoI embedding \( \hat{e}_{n}^{0} \). Following prior works~\cite{li2023diffurec, yang2023generate, mao2025distinguished, liu2024preference}, we formulate this process as:
\begin{equation}
\label{equation:denoise}
\hat{e}_{n}^{0} = p_\theta(e_{n}^{t}, t, {g}_u')
\end{equation}

\subsection{Training Criteria}
\label{subsection_training}
The overall objective of \modelname\ \( \mathcal{L}_{\text{\modelname}} \) is to learn the joint distribution \( p(e, \tau \mid {s}_u) \) by optimizing two components: the ToI prediction loss \( \mathcal{L}_{\text{ToI}} \), which encourages accurate estimation of the next ToI embedding, and the IoI prediction loss \( \mathcal{L}_{\text{IoI}} \), which guides the model to reconstruct the IoI embedding based on the enriched guidance signal incorporating the predicted ToI information.

\paragraph{ToI Prediction Loss}
The loss for the ToI prediction module is defined using the cosine similarity \( S(\cdot, \cdot) \) between the predicted ToI embedding and the ground-truth ToI embedding:
\begin{equation}
\label{equation_Ltime}
\mathcal{L}_{\text{ToI}} = -S(\tau_{n},M_\varphi({g}_u, \tau_{n-1}))
\end{equation}

\paragraph{IoI Prediction Loss}
We adopt a BPR-style positive–negative contrastive learning strategy to train the denoising model, following~\cite{liu2024preference}, in order to better capture user preferences. The loss for the standard target IoI denoising task is defined as:
\begin{equation}
\label{equation_Lnormal}
\mathcal{L}_{\text{Normal}} = \mathbb{E}_{({e}_{n}^{0+}, t, {g}_u')} \left[ \left\| p_{\theta}({e}_{n}^{t+}, t, {g}_u') - {e}_{n}^{0+} \right\|_2^2 \right]
\end{equation}
, where \( {e}_{n}^{0+} \) denotes the positive item that the user has actually interacted with. Then we randomly sample a set of \( k \) negative items from the current training batch, denoted as \( \mathcal{H} = \{{e}_{n}^{0-(1)},  \dots, {e}_{n}^{0-(k)}\} \). We compute the centroid of the negative set as \( {{e'}}_{n}^{0-} = \frac{1}{k} \sum_{i=1}^{k} {e}_{n}^{0-(i)} \), aiming to mitigate the drawbacks of individual negative sampling~\cite{liu2024preference}. We formulate the BPR loss to enhance the likelihood of positive items while pushing them away from the centroid of the sampled negative items, where \( \sigma(\cdot) \) represents the sigmoid function:
\begin{equation}
\label{equation_LBPR}
\mathcal{L}_{\text{BPR}} = -\log \sigma \left( -k \cdot \left[ S(\hat{{e}}_{n}^{0+}, {e}_{n}^{0+}) - S(\hat{{e'}}_{n}^{0-}, {{e'}}_{n}^{0-}) \right] \right)
\end{equation} 
Then, we formulate the IoI prediction loss for denoising the target IoI embedding in the diffusion process as follows:
\begin{equation}
\label{equation_LDiff}
\mathcal{L}_{\text{IoI}} = \lambda \mathcal{L}_{\text{Normal}} + (1 - \lambda)\mathcal{L}_{\text{BPR}}
\end{equation}
where \( \lambda \in [0,1] \) is a tunable hyperparameter that balances the objectives of generation learning and preference modeling.

Above all, we define the final loss function for \modelname\ as:
\begin{equation}
\label{equation_LTAGDiff}
\mathcal{L}_{\text{\modelname}} = \eta \mathcal{L}_{\text{IoI}} + (1 - \eta)\mathcal{L}_{\text{ToI}}
\end{equation}
where \( \eta \in [0,1] \) is a tunable hyperparameter that balances the model's ability to denoise and reconstruct the target IoI embedding and to predict the potential ToI embedding.

\subsection{Inference Criteria}
To accelerate the inference process, we adopt DDIM~\cite{song2020denoising} for the inference process instead of DDPM, following~\cite{liu2024preference}. The forward process follows the formulation from Section~\ref{subsubsec:Forward}:
\begin{equation}
{e}_n^{t} = \sqrt{\bar{\alpha}_t} \, {e}_n^{0} + \sqrt{1 - \bar{\alpha}_t} \, {\epsilon}, \quad t \in [1, T]
\end{equation}
where the noise \( {\epsilon} \) can be derived as:
\begin{equation}
{\epsilon} = \frac{{e}_n^{t} - \sqrt{\bar{\alpha}_t} \, {e}_n^{0}}{\sqrt{1 - \bar{\alpha}_t}}
\end{equation}

As mentioned in Section~\ref{subsec:IPM}, the model predicts \( \hat e_n^0 \) at each diffusion step \( t \). 
We employ classifier-free guidance~\cite{ho2022classifier} as:
\begin{equation}
\hat e_n^0 = (1 + w) \, p_\theta(e_n^t, t, g_u') - w \, p_\theta(e_n^t, t, \Phi)
\end{equation}
where \( w \) is the guidance weight and \( \Phi \) denotes a dummy token representing the unconditional path.

Accordingly, the noise and the next-step \( {e}_n^{t-1} \) can be directly estimated using the predicted clean IoI embedding \( \hat{{e}}_n^{0} \) and the approximated noise \( \hat{{\epsilon}} \) as follows, without introducing additional randomness, thus making the sampling process more efficient:
\begin{equation}
\label{equation_denoise}
\hat{{\epsilon}} = \frac{{e}_n^{t} - \sqrt{\bar{\alpha}_t} \, \hat{{e}}_n^{0}}{\sqrt{1 - \bar{\alpha}_t}}, \quad
{e}_n^{t-1} = \sqrt{\bar{\alpha}_{t-1}} \, \hat{{e}}_n^{0} + \sqrt{1 - \bar{\alpha}_{t-1}} \, \hat{{\epsilon}}
\end{equation}

Based on this idea, we start by sampling a pure Gaussian noise as the initial target IoI embedding at timestep \( T \), \( {e}_n^{T} \sim \mathcal{N}(0, {I}) \). At each timestep, the denoising model predicts \( \hat{{e}}_n^{0} \), which is then used to compute the previous-step representation \( {e}_n^{t-1} \) via Equation~\ref{equation_denoise}. This process iterates from \( t = T \) down to \( t = 0 \), yielding the final reconstructed item embedding \( {e}_n^{0} \). Finally, we compute the similarity between the final predicted representation \( {e}_n^{0} \) and all candidate item embeddings. The top-K most similar items are then selected as the final recommendation results for the user. We summarize our algorithms in Appendix~\ref{app:alg}.

\subsection{Theoretical analysis}

\begin{proposition}
Minimizing the loss function of \modelname\ implicitly encourages the model to maximize the mutual information between the ToI and IoI representations.
\end{proposition}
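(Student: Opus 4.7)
The plan is to recast both loss components as variational lower bounds on mutual information and show that their joint minimization pushes those bounds upward. I would start from the definition
\begin{equation*}
I(e_n;\tau_n\mid s_u)=\mathbb{E}_{p(e_n,\tau_n,s_u)}\!\bigl[\log p(e_n\mid\tau_n,s_u)-\log p(e_n\mid s_u)\bigr],
\end{equation*}
and observe that the denoising network $p_\theta(e_n^t,t,g_u')$ in Eq.~\ref{equation:denoise}, together with the fused representation $g_u'$ into which $\hat\tau_n$ is injected through Eq.~\ref{equation_fusion}, plays the role of a variational approximation $q_\theta(e_n\mid\tau_n,s_u)$ of the true conditional.

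Next, I would invoke the Barber--Agakov bound $I(e_n;\tau_n\mid s_u)\geq\mathbb{E}[\log q_\theta(e_n\mid\tau_n,s_u)]-\mathbb{E}[\log p(e_n\mid s_u)]$ and show that each summand of $\mathcal{L}_{\text{IoI}}$ in Eq.~\ref{equation_LDiff} controls the first term: the reconstruction term $\mathcal{L}_{\text{Normal}}$ corresponds, up to additive constants, to the negative log-likelihood of an isotropic Gaussian posterior centered at $p_\theta(e_n^t,t,g_u')$, while the BPR term $\mathcal{L}_{\text{BPR}}$ with $k$ contrasted in-batch negatives is a sampled estimator of an InfoNCE-style bound on $I(e_n;g_u')$, contributing an extra $\log k$ margin. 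Minimising $\mathcal{L}_{\text{IoI}}$ therefore tightens a variational lower bound on $I(e_n;g_u')$; since $g_u'$ is a deterministic function of $(s_u,\hat\tau_n)$, the data-processing inequality yields $I(e_n;g_u')\leq I(e_n;s_u,\hat\tau_n)$, with equality in the idealised case where the fusion MLP is information-preserving.

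The role of $\mathcal{L}_{\text{ToI}}$ in Eq.~\ref{equation_Ltime} is then to force $\hat\tau_n$ to carry the same information as $\tau_n$: the cosine-similarity objective aligns $M_\varphi(g_u,\tau_{n-1})$ with $\tau_n$, so as this loss is driven down we obtain $I(\hat\tau_n;\tau_n)\!\to\!H(\tau_n)$, and hence, by continuity of the fusion MLP, $I(e_n;s_u,\hat\tau_n)\!\to\!I(e_n;s_u,\tau_n)$. Applying the chain rule $I(e_n;s_u,\tau_n)=I(e_n;s_u)+I(e_n;\tau_n\mid s_u)$ and absorbing the purely data-dependent term $I(e_n;s_u)$ into a constant $C$ independent of $\theta$ gives
\begin{equation*}
I(e_n;\tau_n\mid s_u)\;\geq\;-\eta\,\mathcal{L}_{\text{IoI}}-(1-\eta)\,\mathcal{L}_{\text{ToI}}+C,
\end{equation*}
so $\mathcal{L}_{\text{\modelname}}$ is a surrogate upper bound on $-I(e_n;\tau_n\mid s_u)$, which is exactly the statement of the proposition.

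The main obstacle is making the step ``$\hat\tau_n\!\to\!\tau_n$ in cosine alignment implies a tight MI bound'' rigorous, because cosine similarity only constrains angular, not distributional, closeness. I would address this either by assuming that the time-encoding $\mathrm{TE}(\cdot)$ of Section~\ref{subsec:TEM} is an injection with a Lipschitz left inverse on the unit sphere---so angular recovery is equivalent to recovery of $\tau_n$ itself---or by reinterpreting $\mathcal{L}_{\text{ToI}}$ as the negative log-likelihood under a von Mises--Fisher model on $\tau_n$, which converts it directly into a Barber--Agakov lower bound on $I(\hat\tau_n;\tau_n)$ and lets the two bounds compose cleanly into the single mutual-information inequality above.
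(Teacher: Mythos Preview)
Your proposal is considerably more ambitious than the paper's own argument, which is essentially a three-line sketch: it observes that minimising $\mathcal{L}_{\text{ToI}}$ increases $I(\tau_n;g_u)$ because $\hat\tau_n$ is predicted from $g_u$, that $\mathcal{L}_{\text{IoI}}$ is a BPR/contrastive objective and hence (citing the InfoNCE result of van den Oord et al.) lower-bounds $I(e_n;g_u')$, and then concludes that since both $\tau_n$ and $e_n$ ``interact with the shared user representation'' the two pieces jointly encourage $I(\tau_n;e_n)$ to grow. No Barber--Agakov bound, no data-processing inequality, no chain-rule decomposition, and no attempt at a single quantitative inequality appear in the paper.

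Your version sharpens this by making each step variational and trying to compose them. The ingredients---InfoNCE for $\mathcal{L}_{\text{BPR}}$, the Gaussian log-likelihood reading of $\mathcal{L}_{\text{Normal}}$, DPI from $g_u'$ to $(s_u,\hat\tau_n)$, and the chain rule for $I(e_n;s_u,\tau_n)$---are all legitimate. The gap is the final displayed inequality. Your argument gives (up to constants) $-\mathcal{L}_{\text{IoI}}\le I(e_n;g_u')\le I(e_n;s_u,\hat\tau_n)$, and separately that driving $\mathcal{L}_{\text{ToI}}$ down makes $\hat\tau_n$ a faithful surrogate for $\tau_n$ so that $I(e_n;s_u,\hat\tau_n)\to I(e_n;s_u,\tau_n)$. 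These two roles do not combine additively: the second validates the \emph{target} of the first rather than contributing a separate summand. Even under your von Mises--Fisher reinterpretation, $-\mathcal{L}_{\text{ToI}}$ bounds $I(\tau_n;\hat\tau_n)$, which is a different MI term and does not add linearly to $I(e_n;\tau_n\mid s_u)$. To get the weighted-sum bound you write, you would need a quantitative lemma relating the slack $I(e_n;s_u,\tau_n)-I(e_n;s_u,\hat\tau_n)$ to $\mathcal{L}_{\text{ToI}}$, which neither you nor the paper supplies. Since the proposition is stated only at the level of ``implicitly encourages,'' your qualitative chain of implications already delivers what the paper claims; the explicit inequality is an overreach you should drop or prove separately.
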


\begin{proof}
From Equation~\ref{equation_LTAGDiff}, the overall loss of \modelname\ is composed of two components: IoI prediction loss $\mathcal{L}_{\text{IoI}}$ and ToI prediction loss $\mathcal{L}_{\text{ToI}}$. $\mathcal{L}_{\text{ToI}}$ aims to predict the ToI embedding $\hat{\tau}_n$ conditioned on ${g}_u$. Minimizing this objective implicitly increases the mutual information between $\tau_n$ and ${g}_u$. $\hat{\tau}_n$ is then fused with ${g}_u$ via Equation~\ref{equation_fusion} to produce the enhanced user representation ${g}_u'$, which is used to predict the IoI representation ${e}_n$. The loss $\mathcal{L}_{\text{IoI}}$ is formulated as a BPR-style contrastive objective, and contrastive learning objective is known to maximize a lower bound of mutual information~\cite{oord2018representation}. So this implicitly increases the mutual information between ${g}_u'$ and ${e}_n$. As both ToI ($\tau_n$) and IoI (${e}_n$) interact with the shared user representation, minimizing the total loss encourages the model to capture more mutual information between them.
\end{proof}

\begin{proposition}
The Evidence Lower Bound (ELBO) of \modelname\ is greater than or equal to that of traditional diffusion-based recommenders, subject to the accurate prediction of ToI.
\end{proposition}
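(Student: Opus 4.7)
The plan is to show that the ELBO of \modelname, viewed as an optimization over a model class, is at least as large as the ELBO of a traditional diffusion-based recommender, via a model-class containment argument combined with an information-theoretic inequality on conditional log-likelihood.

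First, I would write both ELBOs in a comparable form. The traditional bound on $\log p(e_n^0 \mid g_u)$ has the usual diffusion decomposition: a reconstruction term at $t=1$, a prior KL at $t=T$, and intermediate KL terms
\begin{equation*}
D_{\mathrm{KL}}\bigl(q(e_n^{t-1}\mid e_n^t, e_n^0)\,\|\,p_\theta(e_n^{t-1}\mid e_n^t, g_u)\bigr).
\end{equation*}
The IoI branch of \modelname\ has the identical structure, but substitutes the enriched condition $g_u'$ for $g_u$; the full \modelname\ ELBO further adds the ToI term corresponding to $\mathcal{L}_{\text{ToI}}$ from Equation~\ref{equation_LTAGDiff}. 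The comparison is then made at the level of the IoI branch (plus a non-negative ToI contribution).

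Second, I would exploit the fusion defined in Equation~\ref{equation_fusion} together with the residual update that follows it. Because $g_u'$ is constructed from $g_u$ through a residual connection, $g_u$ is recoverable from $g_u'$ (for instance by taking $\gamma \to 0$, or by choosing MLP parameters so the fused output projects onto $g_u$). Hence the reverse-kernel family $\{p_\theta(\cdot\mid\cdot,g_u'):\theta\}$ contains as a subfamily exactly the traditional kernels $\{p_\theta(\cdot\mid\cdot,g_u):\theta\}$. Any ELBO value attainable by a traditional diffusion recommender is therefore attainable by \modelname, giving termwise dominance of $\mathrm{ELBO}_{\text{\modelname}}^{\mathrm{IoI}}$ over $\mathrm{ELBO}_{\text{trad}}$.

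Third, I would activate the accurate-ToI assumption to quantify the gap. When $\hat{\tau}_n\approx\tau_n$, the fused representation $g_u'$ acts as a sufficient statistic for $(g_u,\tau_n)$, so the tightest \modelname\ ELBO targets $\log p(e_n^0\mid g_u,\tau_n)$. The standard entropy inequality $H(X\mid Y,Z)\leq H(X\mid Y)$, equivalently
\begin{equation*}
\mathbb{E}\bigl[\log p(e_n^0\mid g_u,\tau_n)\bigr]\;\geq\;\mathbb{E}\bigl[\log p(e_n^0\mid g_u)\bigr],
\end{equation*}
then yields an improvement of size equal to the conditional mutual information $I(e_n^0;\tau_n\mid g_u)$ already identified in Proposition~1. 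The principal obstacle will be the approximate-ToI regime, where the error in $\hat{\tau}_n$ propagates through both the fusion MLP and the denoiser; I would control this via a Lipschitz argument on these components, bounding the gap by $\mathcal{O}(\mathcal{L}_{\text{ToI}})$ measured through the cosine objective in Equation~\ref{equation_Ltime}, so that $\mathrm{ELBO}_{\text{\modelname}}\geq\mathrm{ELBO}_{\text{trad}}+I(e_n^0;\tau_n\mid g_u)-\mathcal{O}(\mathcal{L}_{\text{ToI}})$, and the slack vanishes under the stated accurate-prediction assumption.
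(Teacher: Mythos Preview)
Your proposal reaches the same conclusion but by a somewhat different and more careful route than the paper. The paper's argument is short: it defines $\mathrm{ELBO}_t = \mathbb{E}_q[\log p(e_n^0,\tau_n\mid g_u)]$, applies the chain rule to split this into an IoI term $\mathbb{E}_q[\log p_\theta(e_n^0\mid\tau_n,g_u)]$ and a ToI term $\mathbb{E}_q[\log M_\varphi(\tau_n\mid g_u)]$, invokes the same conditioning-reduces-entropy inequality you use in your third step to bound the IoI term below by $\mathrm{ELBO}_o$, and then declares the ToI term $\to 0$ under accurate prediction. Your proposal adds two ingredients the paper does not have: the model-class containment argument (step two), which cleanly handles the distinction between achievable model ELBOs and true-distribution entropies that the paper glosses over; and the Lipschitz control of the $\hat\tau_n$-error propagation, which gives a quantitative $\mathcal{O}(\mathcal{L}_{\text{ToI}})$ slack rather than simply asserting the ToI term vanishes. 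One correction: your parenthetical ``plus a non-negative ToI contribution'' has the wrong sign---the ToI log-likelihood term $\mathbb{E}_q[\log M_\varphi(\tau_n\mid g_u)]$ is non-\emph{positive}, which is precisely why the accurate-prediction hypothesis is needed (so that it approaches zero from below rather than dragging the joint ELBO down); you implicitly acknowledge this later when you write the slack as $-\mathcal{O}(\mathcal{L}_{\text{ToI}})$, but the earlier wording should be fixed.
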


\begin{proof}
We first recall the ELBO of diffusion models for sequential recommendation, as derived in ~\cite{luo2022understanding, yang2023generate}:
\begin{align}
\log p({e}_n) 
&= \mathbb{E}_{q({e}_n^{0:T})} \left[ \log p({e}_n^{0}) \right] 
+ D_{\mathrm{KL}} \left( q({e}_n^{0:T}) \,\|\, p({e}_n^{0:T}) \right) \\
&\geq \mathbb{E}_{q({e}_n^{0:T})} \left[ \log p({e}_n^{0}) \right] 
\end{align}
Theoretically, a larger ELBO suggests that the model is more likely to achieve a closer optimization of the true data likelihood~\cite{ho2020denoising}. Traditional diffusion-based sequential recommenders do not incorporate ToI information. Their ELBO is defined as
\begin{equation}
\mathrm{ELBO}_o =\mathbb{E}_{q({e}_n^{0:T})} \left[ \log p({e}_n^0 \mid {g}_u) \right],
\label{eq:elbo_without_time}
\end{equation}
which depends solely on the user representation \( {g}_u \) extracted from the historical interaction sequence \( {s}_u \) (see Section~\ref{subsec:TPM}).

\modelname\ jointly models the target IoI and ToI information, resulting in the following ELBO over the joint distribution:
\begin{equation}
\mathrm{ELBO}_t = \mathbb{E}_{q(e_n^{0:T})} \left[ \log p({e}_n^0, {\tau}_n \mid {g}_u) \right]
\label{eq:elbo_joint}
\end{equation}

Based on the chain rule of joint probability, the ELBO over the joint distribution \( p({e}_n^0, {\tau}_n \mid {g}_u) \) can be decomposed as:
\begin{equation}
\begin{aligned}
\label{equation_ELBOTAGDiff}
\mathrm{ELBO}_t 
&= \mathbb{E}_{q} \left[ \log p({e}_n^0, {\tau}_n \mid {g}_u) \right] \\
&= \mathbb{E}_{q} \left[ \log p_\theta({e}_n^0 \mid {\tau}_n, {g}_u) + \log M_\varphi({\tau}_n \mid {g}_u) \right] \\
&= \underbrace{\mathbb{E}_{q} \left[ \log p_\theta({e}_n^0 \mid {\tau}_n, {g}_u) \right]}_{\text{IoI Prediction}}
\; + \;
\underbrace{\mathbb{E}_{q} \left[ \log M_\varphi({\tau}_n \mid {g}_u) \right]}_{\text{ToI Prediction}}
\end{aligned}
\end{equation}

Based on the definitions of conditional entropy and mutual information~\cite{cover1999elements}, we have:
\begin{equation}
I({e}_n^0, {\tau}_n \mid {g}_u)
= H({e}_n^0 \mid {g}_u) - H({e}_n^0 \mid {\tau}_n, {g}_u)
\label{eq:mutual_info_def}
\end{equation}
Since mutual information is always non-negative:
\begin{equation}
I({e}_n^0, {\tau}_n \mid {g}_u) \geq 0 
\quad \Rightarrow \quad
H({e}_n^0 \mid {\tau}_n, {g}_u) \leq H({e}_n^0 \mid {g}_u)
\label{eq:entropy_inequality}
\end{equation}
By the definition of conditional entropy:
\begin{equation}
H({e}_n^0 \mid {\tau}_n, {g}_u)
= - \mathbb{E}_q \left[ \log p({e}_n^0 \mid {\tau}_n, {g}_u) \right]
\label{eq:cond_entropy_tau}
\end{equation}
\begin{equation}
H({e}_n^0 \mid {g}_u)
= - \mathbb{E}_q \left[ \log p({e}_n^0 \mid {g}_u) \right]
\label{eq:cond_entropy_no_tau}
\end{equation}
Therefore, we conclude:
\begin{equation}
\underbrace{\mathbb{E}_{q} \left[ \log p_\theta({e}_n^0 \mid {\tau}_n, {g}_u) \right]}_{\text{IoI Prediction}}
\geq
\underbrace{\mathbb{E}_q \left[ \log p({e}_n^0 \mid {g}_u) \right]}_{\mathrm{ELBO}_o},
\label{eq:loglikelihood_comparison}
\end{equation}

And for the second term of Equation~\ref{equation_ELBOTAGDiff}, this approaches zero when the ToI prediction model $M_\varphi$ approximates the true posterior distribution well. Therefore, under the condition that ToI prediction is accurate, we conclude that the following inequality holds:
\begin{equation}
\underbrace{\mathbb{E}_{q} \left[ \log p({e}_n^0, {\tau}_n \mid {g}_u) \right]}_{\mathrm{ELBO}_t}
\geq
\underbrace{\mathbb{E}_q \left[ \log p({e}_n^0 \mid {g}_u) \right]}_{\mathrm{ELBO}_o},
\label{eq:loglikelihood_comparison}
\end{equation}

Section~\ref{ToIaccuracy} presents empirical results demonstrating that the ToI prediction model achieves high accuracy in practice.
\end{proof}

\section{Experiments}
We design experiments to answer the following research questions:

\textbf{RQ1:} 
How effective is \modelname\ compared to the state-of-the-art methodologies?

\textbf{RQ2:} 
How do different time encoding functions impact the performance of \modelname? 

\textbf{RQ3:} 
How do the designed components contribute to the performance improvements of \modelname?

\textbf{RQ4:} 
How does incorporating predicted ToI information into user preferences influence the performance of \modelname?

\textbf{RQ5:} 
How accurately does the ToI Prediction Module estimate the user's time of interest compared to the ground truth?

\begin{table}[t]
\caption{Statistics of Datasets}
\label{tab:Statical}
\centering
\resizebox{\columnwidth}{!}{
\begin{tabular}{lrrrrr}
\toprule
\textbf{Dataset} & \textbf{\# Sequence} & \textbf{\# items} & \textbf{\# Actions} & \textbf{Avg\_len} & \textbf{Sparsity} \\
\midrule
Beauty & 22,363 & 12,101 & 162,150 & 7.25 & 99.94\% \\
Toys & 19,412 & 11,921 & 138,444 & 7.13 & 99.94\% \\
Sports & 35,598 & 18,357 & 256,598 & 7.21 & 99.96\% \\
MovieLens-10M & 69,878 & 10,027 & 3,054,340 & 43.71 & 99.56\% \\
MovieLens-20M & 138,493 & 17,177 & 6,013,602 & 43.42 & 99.75\% \\

\bottomrule
\end{tabular}
}
\end{table}

\begin{table*}[htbp]
\small
\centering
\caption{Performance comparison of models under LOO Split (\%). Bold denotes the best; underline denotes the second best.}
\label{tab:LOO_compare}
\begin{adjustbox}{max width=\textwidth}
\begin{tabular}{llccccccccccc}
\toprule
\textbf{Dataset} & \textbf{Metric} & \textbf{SASRec} & \textbf{GRU4Rec} & \textbf{BERT4Rec} & \textbf{TiSASRec} & \textbf{MEANTIME} & \textbf{DiffuRec} & \textbf{DreamRec} & \textbf{PreferDiff} & \textbf{\modelname} \\
\midrule

\multirow{4}{*}{Beauty}
& \textbf{H@5}     & 0.80 & 1.73 & 1.21 & 2.24  & 1.94 & 1.61 & 2.28 & \underline{3.00} & \textbf{3.43} \\
& \textbf{H@10}    & 1.47 & 2.81 & 2.22 & 3.52  & 3.15 & 2.37 & 3.25 & \underline{3.81} & \textbf{4.31} \\
& \textbf{N@5}     & 0.43 & 1.14 & 0.77 & 1.39  & 1.21 & 0.99 & 1.55 & \underline{2.23} & \textbf{2.48} \\
& \textbf{N@10}    & 0.65 & 1.49 & 1.09 & 1.81  & 1.61 & 1.23 & 1.86 & \underline{2.49} & \textbf{2.77} \\
\midrule

\multirow{4}{*}{Sports}
& \textbf{H@5}     & 0.63 & 0.58 & 0.82 & 1.14 & 0.97 & 0.65 & 1.05 & \underline{1.50} & \textbf{1.66} \\
& \textbf{H@10}    & 1.29 & 1.14 & 1.56 & 1.73 & 1.51 & 1.15 & 1.49 & \underline{1.93} & \textbf{2.14} \\
& \textbf{N@5}     & 0.41 & 0.41 & 0.50 & 0.74 & 0.60 & 0.44 & 0.69 & \underline{1.06} & \textbf{1.21} \\
& \textbf{N@10}    & 0.62 & 0.59 & 0.73 & 0.92 & 0.77 & 0.60 & 0.84 & \underline{1.20} & \textbf{1.36} \\
\midrule

\multirow{4}{*}{Toys}
& \textbf{H@5}     & 0.64 & 0.81 & 0.71 & 1.69 & 0.69 & 0.09 & 2.66 & \underline{3.21} & \textbf{3.39} \\
& \textbf{H@10}    & 1.27 & 1.31 & 1.10 & 2.65 & 1.34 & 0.21 & 3.46 & \underline{3.66} & \textbf{3.88} \\
& \textbf{N@5}     & 0.37 & 0.52 & 0.43 & 1.03 & 0.45 & 0.05 & 1.88 & \underline{2.49} & \textbf{2.65} \\
& \textbf{N@10}    & 0.57 & 0.68 & 0.56 & 1.34 & 0.66 & 0.09 & 2.14 & \underline{2.63} & \textbf{2.81} \\
\midrule

\multirow{4}{*}{ML-10M}

& \textbf{H@5}     & 2.86  & 6.62  & 1.17  & 4.45  & 5.23  & \underline{6.65}  & 3.82  & 6.48 & \textbf{8.89} \\
& \textbf{H@10}    & 5.30  & 10.66 & 2.32  & 7.94  & 9.15  & \underline{10.83} & 5.48  & 8.98 & \textbf{11.67} \\
& \textbf{N@5}     & 1.68  & 4.25  & 0.67  & 2.69  & 3.27  & 4.21  & 2.75  & \underline{4.48} & \textbf{6.43} \\
& \textbf{N@10}    & 2.46  & \underline{5.55}  & 1.04  & 3.81  & 4.52  & \underline{5.55}  & 3.28  & 5.28 & \textbf{7.33} \\
\midrule

\multirow{4}{*}{ML-20M}

& \textbf{H@5}     & 2.28  & 7.38  & 1.26  & 4.51  & 5.67  & \underline{7.49}  & 3.58 & 6.32  & \textbf{8.96} \\
& \textbf{H@10}    & 4.52  & 11.50 & 2.46  & 7.96  & 9.55  & \underline{11.93} & 4.99 & 8.83  & \textbf{11.95} \\
& \textbf{N@5}     & 1.33  & 4.76  & 0.75  & 2.71  & 3.56  & \underline{4.82}  & 2.55 & 4.33  & \textbf{6.39} \\
& \textbf{N@10}    & 2.05  & 6.08  & 1.13  & 3.81  & 4.80  & \underline{6.25}  & 3.01 & 5.14  & \textbf{7.35} \\

\bottomrule
\end{tabular}
\end{adjustbox}
\end{table*}

\begin{table*}[htbp]
\centering
\small
\caption{Performance comparison of models under Temporal Split (\%). Bold denotes the best; underline denotes the second best.}
\label{tab:811_compare}
\begin{adjustbox}{max width=\textwidth}
\begin{tabular}{llccccccccccc}
\toprule
\textbf{Dataset} & \textbf{Metric} & \textbf{SASRec} & \textbf{GRU4Rec} & \textbf{BERT4Rec} & \textbf{TiSASRec} & \textbf{MEANTIME} & \textbf{DiffuRec} & \textbf{DreamRec} & \textbf{PreferDiff} & \textbf{\modelname}  \\
\midrule

\multirow{4}{*}{Beauty}

& \textbf{H@5}     & 0.89 & 1.88 & 2.15  & 2.06  & 2.32  & 2.32 & 3.80 & \underline{4.07} & \textbf{4.74} \\
& \textbf{H@10}    & 1.70 & 2.59 & 3.53  & 4.16  & 3.67  & 3.49 & 4.69 & \underline{4.74} & \textbf{5.63} \\
& \textbf{N@5}     & 0.55 & 1.15 & 1.31  & 1.41  & 1.50  & 1.50 & 2.84 & \underline{3.10} & \textbf{3.58} \\
& \textbf{N@10}    & 0.82 & 1.38 & 1.75  & 2.08  & 1.92  & 1.89 & 3.13 & \underline{3.32} & \textbf{3.87} \\

\midrule

\multirow{4}{*}{Sports}

& \textbf{H@5}     & 0.84 & 1.12 & 1.35 & 1.29 & 0.95 & 1.01 & 1.54 & \underline{1.85} & \textbf{2.11} \\
& \textbf{H@10}    & 1.63 & 1.71 & \textbf{2.50} & 1.71 & 1.74 & 1.60 & 1.97 & \underline{2.19} & \textbf{2.50} \\
& \textbf{N@5}     & 0.50 & 0.67 & 0.93 & 0.79 & 0.55 & 0.56 & 1.23 & \underline{1.47} & \textbf{1.65} \\
& \textbf{N@10}    & 0.75 & 0.87 & 1.30 & 0.92 & 0.80 & 0.74 & 1.36 & \underline{1.58} & \textbf{1.78} \\

\midrule

\multirow{4}{*}{Toys}

& \textbf{H@5}     & 0.72 & 1.60 & 0.77 & 1.39 & 1.70 & 2.01 & \underline{4.43} & 4.27 & \textbf{4.94} \\
& \textbf{H@10}    & 0.98 & 2.32 & 1.60 & 2.52 & 2.06 & 2.73 & \underline{5.51} & 5.05 & \textbf{5.61} \\
& \textbf{N@5}     & 0.58 & 1.14 & 0.55 & 1.08 & 1.12 & 1.47 & \underline{3.37} & 3.31 & \textbf{3.84} \\
& \textbf{N@10}    & 0.66 & 1.37 & 0.83 & 1.44 & 1.24 & 1.70 & \underline{3.73} & 3.56 & \textbf{4.06} \\

\midrule

\multirow{4}{*}{ML-10M}
& \textbf{H@5}     & 3.78  & 7.24  & 5.05  & 6.08  & 7.20  & \underline{9.89}  & 5.17  & 8.67  & \textbf{11.00} \\
& \textbf{H@10}    & 6.47  & 10.96 & 9.40  & 10.02 & 10.66 & \underline{13.75} & 7.11  & 11.55 & \textbf{13.79} \\
& \textbf{N@5}     & 2.66  & 4.89  & 3.15  & 4.04  & 4.93  & \underline{7.18}  & 3.82  & 6.32  & \textbf{8.31} \\
& \textbf{N@10}    & 3.51  & 6.09  & 4.54  & 5.30  & 6.05  & \underline{8.43}  & 4.45  & 7.25  & \textbf{9.21} \\

\midrule

\multirow{4}{*}{ML-20M}

& \textbf{H@5}     & 2.35  & 8.11  & 3.96  & 6.38  & 7.89  & \underline{10.13}  & 4.36  & 8.19  & \textbf{10.56} \\
& \textbf{H@10}    & 4.28  & 11.92 & 7.13  & 9.81  & 11.60 & \textbf{14.12}  & 6.04  & 10.92 & \underline{13.41} \\
& \textbf{N@5}     & 1.49  & 5.60  & 2.30  & 4.13  & 5.37  & \underline{7.27}   & 3.16  & 5.98  & \textbf{7.81}  \\
& \textbf{N@10}    & 2.11  & 6.83  & 3.31  & 5.24  & 6.56  & \underline{8.55}   & 3.70  & 6.87  & \textbf{8.74}  \\
\bottomrule
\end{tabular}
\end{adjustbox}
\end{table*}

\subsection{Experimental Settings}
\subsubsection{Dataset Selection}
We evaluate the effectiveness of \modelname\ using five commonly used and publicly available datasets:

\textbf{Amazon 2014\footnotemark}: 
We select the \textit{Sports}, \textit{Beauty}, and \textit{Toys} categories within the Amazon review dataset, collected from the Amazon online shopping platform from May 1996 to July 2014. 
\footnotetext{\url{https://cseweb.ucsd.edu/~jmcauley/datasets/amazon/links.html}}

\textbf{MovieLens\footnotemark}:  
We select the \textit{MovieLens 10M, 20M} subsets, which contain approximately 10 and 20 million user ratings for movies.
\footnotetext{\url{https://grouplens.org/datasets/movielens/}}

Detailed statistics of these datasets are provided in Table~\ref{tab:Statical}.

\subsubsection{Baseline Models}

To demonstrate the efficacy of \modelname\, we compared it with eight state-of-the-art methodologies, which can be categorized into:

\begin{itemize}[leftmargin=*, labelsep=0.5em]
    \item \textbf{Traditional recommenders:} 
    GRU4Rec~\cite{hidasi2015session}, SASRec~\cite{kang2018self}, and BERT4Rec~\cite{sun2019bert4rec} employ discriminative architectures such as GRU and Transformer to model sequential dependencies.

    \item \textbf{Time-aware recommenders:} 
    TiSASRec~\cite{li2020time} introduces a time interval aware self-attention mechanism, and MEANTIME~\cite{cho2020meantime} uses mixture of temporal embeddings into attention mechanism.

    \item \textbf{Generative recommenders:} 
    DreamRec~\cite{hu2024generate}, DiffuRec~\cite{li2023diffurec}, and PreferDiff~\cite{liu2024preference} adopt conditional diffusion-based architectures, where user representations are extracted from historical interactions and used as guidance signals to recommend.
\end{itemize}

\subsubsection{Implementation Details}
Following prior works~\cite{kang2018self, li2023diffurec, liu2024preference}, we treat all user–item interactions as implicit positive feedback and others as negative. All users' historical interaction sequences are chronologically ordered, and users or items with fewer than five interactions are filtered out. We limit sequence lengths to 50 for MovieLens and 10 for Amazon, padding shorter sequences with a special token. Raw timestamps are converted to day-level granularity, then shifted to start from zero and normalized to $[0,1]$ to enable the model to effectively learn temporal patterns. To ensure comprehensive evaluation, we adopt two widely-used data partitioning strategies: the 8:1:1 temporal split~\cite{yang2023generate,liu2024preference} and the leave-one-out (LOO) strategy~\cite{kang2018self,sun2019bert4rec,li2023diffurec}. We evaluate model performance using Hit Rate (H@K) and NDCG (N@K) with \( K = \{5, 10\} \).
Due to space limitation, implementation details of all baseline methods and our hyper-parameter settings are provided in Appendix~\ref{app:exp} and \ref{app:hp}.

\subsection{Overall Performance (RQ1)}
\subsubsection{Recommendation Performance Analysis} 
We conduct experiments on five benchmark datasets to evaluate the recommendation performance of \modelname\ against eight baselines. The results are presented in Table~\ref{tab:LOO_compare} and Table~\ref{tab:811_compare}. Overall, diffusion-based models such as DiffuRec, DreamRec, and PreferDiff generally outperform traditional methods, confirming their strong capability in modeling user preference distributions. Remarkably, \modelname\ achieves the best performance across almost all benchmark datasets and both LOO and temporal data partitions. These results empirically validate our theoretical motivation for jointly modeling users’ Time of Interest and Item of Interest. By leveraging personalized ToI information to enrich user preference representations, \modelname\ provides more accurate IoI predictions for active sequential recommendation.

\subsubsection{Computational Complexity Analysis}
We analyze the computational efficiency of \modelname\ under the LOO partition, and the results are shown in Table~\ref{tab:comparison_Complexity}. Although \modelname\ incorporates additional ToI prediction, the training time increases only marginally. More importantly, it achieves improved recommendation performance without significantly increasing inference time. This demonstrates \modelname's ability to deliver high-quality recommendations efficiently, making it suitable for sequential recommendation tasks where low-latency inference is essential.

\begin{table}[htbp]
\centering
\caption{Time Cost per Training Epoch/Inference Sample}
\label{tab:comparison_Complexity}
\renewcommand{\arraystretch}{1.2}
\begin{adjustbox}{max width=\columnwidth}
\begin{tabular}{lcccccccccc}
\toprule
\multirow{2}{*}{\textbf{Models}} 
  & \multicolumn{2}{c}{\textbf{Beauty}} 
  & \multicolumn{2}{c}{\textbf{Sports}} 
  & \multicolumn{2}{c}{\textbf{Toys}} 
  & \multicolumn{2}{c}{\textbf{ML-10M}}
  & \multicolumn{2}{c}{\textbf{ML-20M}} \\
\cmidrule(lr){2-3} \cmidrule(lr){4-5} \cmidrule(lr){6-7} \cmidrule(lr){8-9} \cmidrule(lr){10-11}
  & Train & Infer & Train & Infer & Train & Infer & Train & Infer & Train & Infer \\
\midrule
TiSASRec    & 5.09s  & 0.0001s & 7.18s  & 0.0001s & 4.45s  & 0.0001s  & 15.42s & 0.0001s & 35.15s & 0.0001s \\
BERT4Rec    & 5.30s  & 0.0001s & 8.48s  & 0.0001s & 3.73s  & 0.0001s  & 44.92s & 0.0001s & 81.99s & 0.0001s \\
SASRec      & 2.82s  & 0.0001s & 4.22s  & 0.0001s & 2.35s  & 0.0001s  & 11.19s & 0.0001s & 21.66s & 0.0001s \\
DiffuRec    & 3.56s  & 0.0044s & 5.66s  & 0.0045s & 3.10s  & 0.0045s  & 17.03s & 0.0057s & 35.15s & 0.0056s \\
DreamRec    & 12.73s & 0.0129s & 19.72s & 0.0118s & 10.74s & 0.0118s  & 153.50s & 0.0119s & 274.40s & 0.0211s \\
PreferDiff  & 13.66s & 0.0011s & 22.11s & 0.0011s & 12.15s & 0.0011s  & 139.29s & 0.0022s & 322.30s & 0.0014s \\
\modelname     & 15.13s & 0.0011s & 25.14s & 0.0012s & 13.18s & 0.0011s  & 147.32s & 0.0022s & 785.13s & 0.0054s \\
\bottomrule
\end{tabular}
\end{adjustbox}
\footnotesize{Note: \textit{Train} refers to the average time required to train one epoch, while \textit{Infer} indicates the average time for predicting the target IoI of a single sample.}
\end{table}

\subsection{Study of \modelname\ (RQ2 \& RQ3)}
\subsubsection{Time Encoding Function Analysis (RQ2)}
\label{subsubsec:Encoding_Analysis}
To answer RQ2, we construct a comparative experiment for three different time encoding functions and analyze their impact of performance. As shown in Table~\ref{tab:combined_encoding}, we observe that no single time encoding function consistently outperforms the others across all datasets. This suggests that temporal interaction patterns vary across datasets, and the effectiveness of time encoding functions should be considered dataset-dependent. Nevertheless, compared to absolute positional encoding, encoding real interaction timestamps consistently improves performance, regardless of the specific time encoding method. These findings highlight the importance of temporal information as a valuable signal for capturing the dynamics of users' personalized preferences and improving sequential recommendation accuracy.

\begin{table}[htbp]
\centering
\caption{Time Encoding Functions Performance Comparison}
\label{tab:combined_encoding}
\begin{adjustbox}{max width=\columnwidth}
\begin{tabular}{llcccccccc}
\toprule
\multirow{2}{*}{\textbf{Dataset}} & \multirow{2}{*}{\textbf{Functions}} & \multicolumn{4}{c}{\textbf{Leave-one-out Split}} & \multicolumn{4}{c}{\textbf{Temporal Split}} \\
\cmidrule(lr){3-6} \cmidrule(lr){7-10}
& & \textbf{H@5} & \textbf{H@10} & \textbf{N@5} & \textbf{N@10} & \textbf{H@5} & \textbf{H@10} & \textbf{N@5} & \textbf{N@10} \\
\midrule

\multirow{4}{*}{Beauty}
  & Position     & 3.00   & 3.81     & 2.23   & 2.49  & 4.07   & 4.74   & 3.10   & 3.32 \\
  & Sinusoidal   & 3.16   & 4.18     & 2.33   & 2.66  & \textbf{4.60}  & \textbf{5.63}  & \textbf{3.40}  & \textbf{3.74} \\
  & Gaussian     & 3.28   & 4.15     & 2.44   & 2.72  & 4.11   & 5.10  & 3.27   & 3.59 \\
  & RFF          & \textbf{3.36} & \textbf{4.15}  & \textbf{2.47} & \textbf{2.73} & 4.07   & 5.36  & 3.24   & 3.66 \\
\midrule

\multirow{4}{*}{Sports}
  & Position     & 1.50   & 1.93     & 1.06   & 1.20   & 1.85   & 2.19   & 1.47   & 1.58 \\
  & Sinusoidal   & 1.62   & 2.01     & 1.16   & 1.29   & \textbf{2.02}  & \textbf{2.30}  & \textbf{1.51}  & \textbf{1.60} \\
  & Gaussian     & 1.59   & 2.02     & 1.15   & 1.28   & 1.83 & 2.16 & 1.47 & 1.58 \\
  & RFF          & \textbf{1.66} & \textbf{2.14}  & \textbf{1.17} & \textbf{1.32} & 1.80 & 2.13 & 1.43 & 1.54 \\
\midrule

\multirow{4}{*}{Toys}
  & Position     & 3.21   & 3.66    & 2.49   & 2.63  & 4.27   & 5.05   & 3.31   & 3.56 \\
  & Sinusoidal   & 3.35   & 3.84    & 2.57   & 2.73  & \textbf{4.94}  & \textbf{5.82}  & \textbf{3.76}  & \textbf{4.05} \\
  & Gaussian     & \textbf{3.38} & \textbf{3.81}  & \textbf{2.68} & \textbf{2.82}  & 4.69  & 5.46 & 3.60  & 3.84 \\
  & RFF          & 3.36   & 3.81    & 2.67   & 2.82  & 4.69  & 5.56 & 3.56  & 3.84 \\
\midrule

\multirow{4}{*}{ML-10M}
  & Position     & 6.48   & 8.98    & 4.48  & 5.28   & 8.67   & 11.55   & 6.32   & 7.25 \\
  & Sinusoidal   & \textbf{8.24} & \textbf{10.74} & \textbf{6.01} & \textbf{6.82} & \textbf{10.80} & \textbf{13.38} & \textbf{8.16} & \textbf{8.99} \\
  & Gaussian     & 7.98   & 10.81   & 5.60  & 6.52   & 10.03  & 13.08   & 7.48   & 8.46 \\
  & RFF          & 8.17   & 11.09   & 5.70  & 6.64   & 9.90   & 12.82   & 7.31   & 8.27\\
\midrule

\multirow{4}{*}{ML-20M}
  & Position     & 6.36   & 8.87    & 4.35   & 5.16  & 8.19   & 10.92  & 5.98   & 6.87 \\
  & Sinusoidal   & \textbf{8.36} & \textbf{11.39} & \textbf{5.89} & \textbf{6.87} & \textbf{9.76} & \textbf{13.04} & \textbf{7.12} & \textbf{8.18} \\
  & Gaussian     & 8.15   & 11.19  & 5.67   & 6.65 & 9.57   & 12.71  & 7.03  & 8.04 \\
  & RFF          & 7.99   & 11.06  & 5.50   & 6.49 & 9.23   & 12.25  & 6.58  & 7.56 \\
\bottomrule
\end{tabular}
\end{adjustbox}
\footnotesize{Note: \textit{Position} uses absolute positions; \textit{Sinusoidal}, \textit{Gaussian}, and \textit{RFF} encode timestamps. \textbf{Bold values mark the best performance in each dataset.}}
\end{table}

\subsubsection{Ablation Study (RQ3)}
We construct an ablation study to evaluate the effectiveness of each component in \modelname. For fairness, we select the best-performing time encoding function for each dataset based on Table~\ref{tab:combined_encoding}. As shown in Table~\ref{tab:combined_ablation}, replacing absolute positional encoding with time encoding consistently improves performance, as also discussed in Section~\ref{subsubsec:Encoding_Analysis}. Building upon this, incorporating the ToI Prediction Module further improves performance across all datasets and evaluation metrics. These ablation results highlight the importance of modeling both past and future ToI information. More importantly, they demonstrate that the proposed components not only are individually beneficial, but also jointly contribute to performance improvements through integration within \modelname.

\begin{table}[htbp]
\centering
\caption{Ablation Study of \modelname\ Components}
\label{tab:combined_ablation}
\begin{adjustbox}{max width=\columnwidth}
\begin{tabular}{llcccccccc}
\toprule
\multirow{2}{*}{\textbf{Dataset}} & \multirow{2}{*}{\textbf{Setting}} & \multicolumn{4}{c}{\textbf{Leave-one-out Split}} & \multicolumn{4}{c}{\textbf{Temporal Split}} \\
\cmidrule(lr){3-6} \cmidrule(lr){7-10}
& & \textbf{H@5} & \textbf{H@10} & \textbf{N@5} & \textbf{N@10} & \textbf{H@5} & \textbf{H@10} & \textbf{N@5} & \textbf{N@10} \\
\midrule
\multirow{3}{*}{Beauty}
 & Base         & 3.00   & 3.81      & 2.23   & 2.49   & 4.07   & 4.74   & 3.10   & 3.32 \\
 & +TE          & 3.36↑  & 4.15↑     & 2.47↑  & 2.73↑  & 4.60↑  & 5.63↑  & 3.40↑  & 3.74↑ \\
 & +TE+TP       & 3.43↑  & 4.31↑     & 2.48↑  & 2.77↑  & 4.74↑  & 5.63↑  & 3.58↑  & 3.87↑ \\
\midrule
\multirow{3}{*}{Sports}
 & Base         & 1.50   & 1.93   & 1.06   & 1.20   & 1.85   & 2.19   & 1.47   & 1.58 \\
 & +TE          & 1.66↑  & 2.14↑  & 1.17↑  & 1.32↑  & 2.02↑  & 2.30↑  & 1.51↑  & 1.60↑ \\ 
 & +TE+TP       & 1.66↑  & 2.14↑  & 1.21↑  & 1.36↑  & 2.11↑  & 2.50↑  & 1.65↑  & 1.78↑ \\
\midrule
\multirow{3}{*}{Toys}
 & Base         & 3.21   & 3.66   & 2.49   & 2.63   & 4.27   & 5.05   & 3.31   & 3.56 \\
 & +TE          & 3.38↑  & 3.81↑  & 2.68↑  & 2.82↑  & 4.94↑  & 5.82↑  & 3.76↑  & 4.05↑ \\
 & +TE+TP       & 3.39↑  & 3.88↑  & 2.65↑  & 2.81↑  & 4.94↑  & 5.61↑  & 3.84↑  & 4.06↑ \\
\midrule
\multirow{3}{*}{ML-10M}
 & Base         & 6.48   & 8.98    & 4.48   & 5.28   & 8.67   & 11.55   & 6.32   & 7.25 \\
 & +TE          & 8.24↑  & 10.74↑  & 6.01↑  & 6.82↑  & 10.80↑ & 13.38↑  & 8.16↑  & 8.99↑ \\
 & +TE+TP       & 8.89↑  & 11.67↑  & 6.43↑  & 7.33↑  & 11.00↑ & 13.79↑  & 8.31↑  & 9.21↑ \\
\midrule
\multirow{3}{*}{ML-20M}
 & Base         & 6.36   & 8.87    & 4.35   & 5.16   & 8.19    & 10.92   & 5.98   & 6.87 \\
 & +TE          & 8.36↑  & 11.39↑  & 5.89↑  & 6.87↑  & 9.76↑   & 13.04↑  & 7.12↑  & 8.18↑ \\
 & +TE+TP       & 8.96↑  & 11.95↑  & 6.39↑  & 7.35↑  & 10.56↑  & 13.41↑  & 7.81↑  & 8.74↑ \\
\bottomrule
\end{tabular}
\end{adjustbox}
\footnotesize{Note: \textit{Base} refers to the model using absolute positional encoding. \textit{+TE} indicates the replacement of absolute positions with time encoding function. \textit{+TP} denotes the incorporation of the ToI Prediction Module. ↑ indicates improvement.}
\end{table}

\subsection{Impact of ToI information (RQ4)}
To answer RQ4, we analyze the impact of incorporating ToI information into user representations using the Amazon Beauty and Toys datasets, with H@5 as the evaluation metric. As shown in Figure~\ref{fig:Parameters}, the model achieves optimal performance when the residual weight \( \gamma \) is relatively large (e.g., 0.7 or 0.8), with slight variations influenced by the loss weight \( \eta \). These results suggest that incorporating a higher proportion of predicted ToI information into user representations provides enriched guidance during inference. More ToI signal generally provides stronger time guidance for inference, but the specific value should be tuned regarding the real-data distribution.

\begin{figure}[htbp]
    \centering
    \includegraphics[width=\columnwidth]{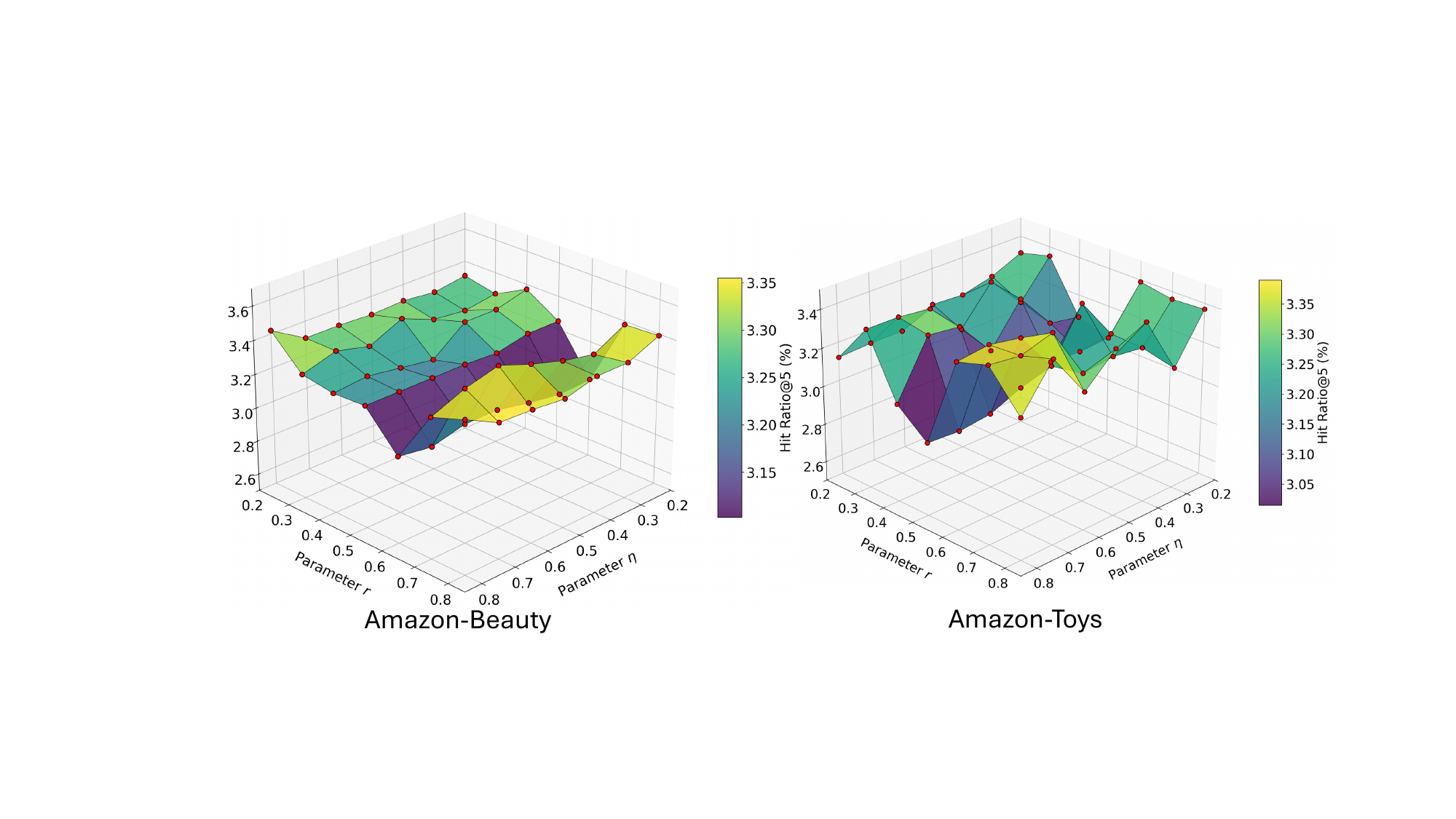}  
    \caption{The Impact of Incorporating ToI Information}
    \label{fig:Parameters}                    
\end{figure}

\subsection{Accuracy Analysis of ToI Prediction (RQ5)}
\label{ToIaccuracy}
To evaluate the accuracy of ToI prediction, we measure the cosine similarity between the predicted and ground-truth ToI embeddings on the Amazon Beauty and MovieLens 20M test sets under the LOO partition. As shown in Figure~\ref{fig:TimeSimilarity}, we observe that the well-trained ToI prediction module performs remarkably well, accurately predicting the next ToI embedding for most samples based on their historical interaction sequences. In particular, on both datasets, the cosine similarity exceeds 0.95 for the majority of test samples, demonstrating the precise ToI prediction capability of \modelname.

\begin{figure}[htbp]
    \centering
    \includegraphics[width=\columnwidth]{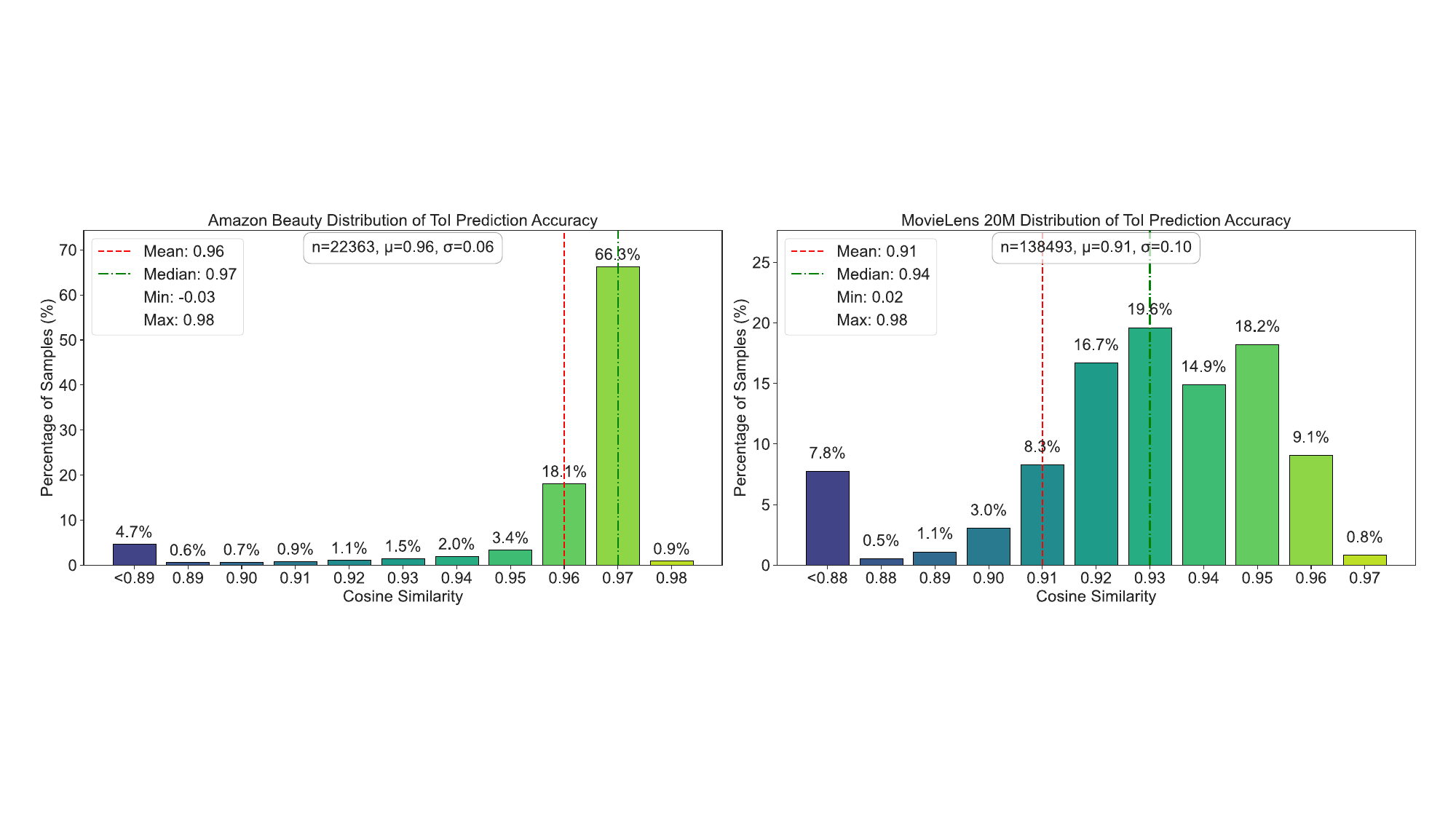}  
    \caption{The Accuracy Distribution of ToI Prediction}
    \label{fig:TimeSimilarity}                    
\end{figure}

\section{Conclusion}

In this paper, we investigate the importance of jointly modeling users’ Time of Interest (ToI) and Item of Interest (IoI) to enhance sequential recommendation, particularly in the context of active recommendation, where the system must not only predict what to recommend but also when to recommend. This distinct our work from most existing sequential recommenders that solely focus on what to recommend.
Specifically, we propose \modelname, a diffusion-based framework supported by theoretical analysis that shows integrating predicted ToI into user representations leads to a tighter evidence lower bound. With carefully designed modules and training objectives, \modelname\ effectively captures the interaction between timing and item preferences to deliver accurate and timely recommendations. 
Extensive experiments on five benchmark datasets under both leave-one-out and temporal data splits demonstrate that \modelname\ consistently outperforms eight state-of-the-art baselines. These results confirm the effectiveness of learning when and what to recommend—paving the way for more personalized active recommender systems.

\clearpage
\bibliographystyle{plain}
\bibliography{references}

@inproceedings{loni2019personalized,
  title={Personalized push notifications for news recommendation},
  author={Loni, Babak and Schuth, Anne and van de Haas, Lucas and Jansze, Jeroen and Visser, Vasco and Van der Wees, Marlies},
  booktitle={2nd Workshop on Online Recommender Systems and User Modeling},
  pages={36--45},
  year={2019},
  organization={PMLR}
}

@article{pan2024proactive,
  title={Proactive Recommendation in Social Networks: Steering User Interest via Neighbor Influence},
  author={Pan, Hang and Bi, Shuxian and Wang, Wenjie and Li, Haoxuan and Wu, Peng and Feng, Fuli and He, Xiangnan},
  journal={arXiv preprint arXiv:2409.08934},
  year={2024}
}

@inproceedings{bi2024proactive,
  title={Proactive recommendation with iterative preference guidance},
  author={Bi, Shuxian and Wang, Wenjie and Pan, Hang and Feng, Fuli and He, Xiangnan},
  booktitle={Companion Proceedings of the ACM Web Conference 2024},
  pages={871--874},
  year={2024}
}

@inproceedings{li2025dimerec,
  title={DimeRec: a unified framework for enhanced sequential recommendation via generative diffusion models},
  author={Li, Wuchao and Huang, Rui and Zhao, Haijun and Liu, Chi and Zheng, Kai and Liu, Qi and Mou, Na and Zhou, Guorui and Lian, Defu and Song, Yang and others},
  booktitle={Proceedings of the Eighteenth ACM International Conference on Web Search and Data Mining},
  pages={726--734},
  year={2025}
}

@article{dhariwal2021diffusion,
  title={Diffusion models beat gans on image synthesis},
  author={Dhariwal, Prafulla and Nichol, Alexander},
  journal={Advances in neural information processing systems},
  volume={34},
  pages={8780--8794},
  year={2021}
}

@book{cover1999elements,
  title={Elements of information theory},
  author={Cover, Thomas M},
  year={1999},
  publisher={John Wiley \& Sons}
}

@article{ho2022classifier,
  title={Classifier-free diffusion guidance},
  author={Ho, Jonathan and Salimans, Tim},
  journal={arXiv preprint arXiv:2207.12598},
  year={2022}
}

@article{oord2018representation,
  title={Representation learning with contrastive predictive coding},
  author={Oord, Aaron van den and Li, Yazhe and Vinyals, Oriol},
  journal={arXiv preprint arXiv:1807.03748},
  year={2018}
}

@article{huang2025survey,
  title={A Survey of Foundation Model-Powered Recommender Systems: From Feature-Based, Generative to Agentic Paradigms},
  author={Huang, Chengkai and Huang, Hongtao and Yu, Tong and Xie, Kaige and Wu, Junda and Zhang, Shuai and Mcauley, Julian and Jannach, Dietmar and Yao, Lina},
  journal={arXiv preprint arXiv:2504.16420},
  year={2025}
}

@inproceedings{rahmani2023incorporating,
  title={Incorporating time in sequential recommendation models},
  author={Rahmani, Mostafa and Caverlee, James and Wang, Fei},
  booktitle={Proceedings of the 17th ACM Conference on Recommender Systems},
  pages={784--790},
  year={2023}
}

@article{wei2025sequential,
  title={Sequential Recommendation System Based on Deep Learning: A Survey},
  author={Wei, Peiyang and Shu, Hongping and Gan, Jianhong and Deng, Xun and Liu, Yi and Sun, Wenying and Chen, Tinghui and Hu, Can and Hu, Zhenzhen and Deng, Yonghong and others},
  journal={Electronics},
  volume={14},
  number={11},
  pages={2134},
  year={2025},
  publisher={MDPI}
}

@article{luo2022understanding,
  title={Understanding diffusion models: A unified perspective},
  author={Luo, Calvin},
  journal={arXiv preprint arXiv:2208.11970},
  year={2022}
}

@article{song2020denoising,
  title={Denoising diffusion implicit models},
  author={Song, Jiaming and Meng, Chenlin and Ermon, Stefano},
  journal={arXiv preprint arXiv:2010.02502},
  year={2020}
}

@inproceedings{ma2024graph,
  title={Graph anomaly detection with few labels: A data-centric approach},
  author={Ma, Xiaoxiao and Li, Ruikun and Liu, Fanzhen and Ding, Kaize and Yang, Jian and Wu, Jia},
  booktitle={Proceedings of the 30th ACM SIGKDD Conference on Knowledge Discovery and Data Mining},
  pages={2153--2164},
  year={2024}
}

@inproceedings{cho2021learning,
  title={Learning heterogeneous temporal patterns of user preference for timely recommendation},
  author={Cho, Junsu and Hyun, Dongmin and Kang, SeongKu and Yu, Hwanjo},
  booktitle={Proceedings of the Web Conference 2021},
  pages={1274--1283},
  year={2021}
}

@inproceedings{ye2020time,
  title={Time matters: Sequential recommendation with complex temporal information},
  author={Ye, Wenwen and Wang, Shuaiqiang and Chen, Xu and Wang, Xuepeng and Qin, Zheng and Yin, Dawei},
  booktitle={Proceedings of the 43rd international ACM SIGIR conference on research and development in information retrieval},
  pages={1459--1468},
  year={2020}
}

@inproceedings{tran2023attention,
  title={Attention mixtures for time-aware sequential recommendation},
  author={Tran, Viet Anh and Salha-Galvan, Guillaume and Sguerra, Bruno and Hennequin, Romain},
  booktitle={Proceedings of the 46th International ACM SIGIR Conference on Research and Development in Information Retrieval},
  pages={1821--1826},
  year={2023}
}

@article{zhang2023time,
  title={A time-aware self-attention based neural network model for sequential recommendation},
  author={Zhang, Yihu and Yang, Bo and Liu, Haodong and Li, Dongsheng},
  journal={Applied Soft Computing},
  volume={133},
  pages={109894},
  year={2023},
  publisher={Elsevier}
}

@inproceedings{cho2020meantime,
  title={MEANTIME: Mixture of attention mechanisms with multi-temporal embeddings for sequential recommendation},
  author={Cho, Sung Min and Park, Eunhyeok and Yoo, Sungjoo},
  booktitle={Proceedings of the 14th ACM Conference on recommender systems},
  pages={515--520},
  year={2020}
}

@inproceedings{dang2023uniform,
  title={Uniform sequence better: Time interval aware data augmentation for sequential recommendation},
  author={Dang, Yizhou and Yang, Enneng and Guo, Guibing and Jiang, Linying and Wang, Xingwei and Xu, Xiaoxiao and Sun, Qinghui and Liu, Hong},
  booktitle={Proceedings of the AAAI conference on artificial intelligence},
  volume={37},
  number={4},
  pages={4225--4232},
  year={2023}
}

@article{fan2024tim4rec,
  title={TiM4Rec: An Efficient Sequential Recommendation Model Based on Time-Aware Structured State Space Duality Model},
  author={Fan, Hao and Zhu, Mengyi and Hu, Yanrong and Feng, Hailin and He, Zhijie and Liu, Hongjiu and Liu, Qingyang},
  journal={arXiv preprint arXiv:2409.16182},
  year={2024}
}

@inproceedings{mao2025distinguished,
  title={Distinguished quantized guidance for diffusion-based sequence recommendation},
  author={Mao, Wenyu and Liu, Shuchang and Liu, Haoyang and Liu, Haozhe and Li, Xiang and Hu, Lantao},
  booktitle={Proceedings of the ACM on Web Conference 2025},
  pages={425--435},
  year={2025}
}

@article{pan2024survey,
  title={A Survey on Sequential Recommendation},
  author={Pan, Liwei and Pan, Weike and Wei, Meiyan and Yin, Hongzhi and Ming, Zhong},
  journal={arXiv preprint arXiv:2412.12770},
  year={2024}
}

@article{lin2024survey,
  title={A Survey on Diffusion Models for Recommender Systems},
  author={Lin, Jianghao and Liu, Jiaqi and Zhu, Jiachen and Xi, Yunjia and Liu, Chengkai and Zhang, Yangtian and Yu, Yong and Zhang, Weinan},
  journal={arXiv preprint arXiv:2409.05033},
  year={2024}
}

@article{liu2024preference,
  title={Preference Diffusion for Recommendation},
  author={Liu, Shuo and Zhang, An and Hu, Guoqing and Qian, Hong and Chua, Tat-seng},
  journal={arXiv preprint arXiv:2410.13117},
  year={2024}
}

@article{wei2025diffusion,
  title={Diffusion Models in Recommendation Systems: A Survey},
  author={Wei, Ting-Ruen and Fang, Yi},
  journal={arXiv preprint arXiv:2501.10548},
  year={2025}
}

@article{jin2020sampling,
  title={Sampling-decomposable generative adversarial recommender},
  author={Jin, Binbin and Lian, Defu and Liu, Zheng and Liu, Qi and Ma, Jianhui and Xie, Xing and Chen, Enhong},
  journal={Advances in Neural Information Processing Systems},
  volume={33},
  pages={22629--22639},
  year={2020}
}

@inproceedings{liang2018variational,
  title={Variational autoencoders for collaborative filtering},
  author={Liang, Dawen and Krishnan, Rahul G and Hoffman, Matthew D and Jebara, Tony},
  booktitle={Proceedings of the 2018 world wide web conference},
  pages={689--698},
  year={2018}
}

@article{ho2020denoising,
  title={Denoising diffusion probabilistic models},
  author={Ho, Jonathan and Jain, Ajay and Abbeel, Pieter},
  journal={Advances in neural information processing systems},
  volume={33},
  pages={6840--6851},
  year={2020}
}

@article{yang2023generate,
  title={Generate what you prefer: Reshaping sequential recommendation via guided diffusion},
  author={Yang, Zhengyi and Wu, Jiancan and Wang, Zhicai and Wang, Xiang and Yuan, Yancheng and He, Xiangnan},
  journal={Advances in Neural Information Processing Systems},
  volume={36},
  pages={24247--24261},
  year={2023}
}

@inproceedings{kang2018self,
  title={Self-attentive sequential recommendation},
  author={Kang, Wang-Cheng and McAuley, Julian},
  booktitle={2018 IEEE international conference on data mining (ICDM)},
  pages={197--206},
  year={2018},
  organization={IEEE}
}

@inproceedings{sun2019bert4rec,
  title={BERT4Rec: Sequential recommendation with bidirectional encoder representations from transformer},
  author={Sun, Fei and Liu, Jun and Wu, Jian and Pei, Changhua and Lin, Xiao and Ou, Wenwu and Jiang, Peng},
  booktitle={Proceedings of the 28th ACM international conference on information and knowledge management},
  pages={1441--1450},
  year={2019}
}

@article{li2023diffurec,
  title={Diffurec: A diffusion model for sequential recommendation},
  author={Li, Zihao and Sun, Aixin and Li, Chenliang},
  journal={ACM Transactions on Information Systems},
  volume={42},
  number={3},
  pages={1--28},
  year={2023},
  publisher={ACM New York, NY}
}

@inproceedings{li2020time,
  title={Time interval aware self-attention for sequential recommendation},
  author={Li, Jiacheng and Wang, Yujie and McAuley, Julian},
  booktitle={Proceedings of the 13th international conference on web search and data mining},
  pages={322--330},
  year={2020}
}

@article{vaswani2017attention,
  title={Attention is all you need},
  author={Vaswani, Ashish and Shazeer, Noam and Parmar, Niki and Uszkoreit, Jakob and Jones, Llion and Gomez, Aidan N and Kaiser, {\L}ukasz and Polosukhin, Illia},
  journal={Advances in neural information processing systems},
  volume={30},
  year={2017}
}

@article{zheng2021rethinking,
  title={Rethinking positional encoding},
  author={Zheng, Jianqiao and Ramasinghe, Sameera and Lucey, Simon},
  journal={arXiv preprint arXiv:2107.02561},
  year={2021}
}

@article{hidasi2015session,
  title={Session-based recommendations with recurrent neural networks},
  author={Hidasi, Bal{\'a}zs and Karatzoglou, Alexandros and Baltrunas, Linas and Tikk, Domonkos},
  journal={arXiv preprint arXiv:1511.06939},
  year={2015}
}

@article{hu2024generate,
  title={Generate and Instantiate What You Prefer: Text-Guided Diffusion for Sequential Recommendation},
  author={Hu, Guoqing and Yang, Zhengyi and Cai, Zhibo and Zhang, An and Wang, Xiang},
  journal={arXiv preprint arXiv:2410.13428},
  year={2024}
}

\clearpage

\appendix
\section{Experiment Implementation Details}
\label{app:exp}
All experiments were conducted in PyTorch on a single NVIDIA Tesla V100-SXM2 GPU with 32GB of memory. We optimized all models using the AdamW optimizer with an initial learning rate of 0.0003. All parameters, including item embeddings, were randomly initialized.  
For data partitioning, we adopted two strategies. In the 8:1:1 temporal split, each dataset was randomly divided into training, validation, and test sets in an 8:1:1 ratio. In the LOO setting, the most recent interaction of each user was used for testing, the second most recent for validation, and the remaining interactions for training.  
We set the training batch size to 256 and the evaluation batch size to 32. Model validation was performed after each epoch, and early stopping was applied if no improvement was observed for 10 consecutive epochs.
In our experiments, we strictly follow the official implementations of all baseline methods, and their corresponding code repositories are listed as follows:
\begin{itemize}[leftmargin=*,itemsep=0.5pt,topsep=2.5pt]
    \item GRU4Rec: \url{https://github.com/RUCAIBox/RecBole/blob/master/recbole/model/sequential_recommender/gru4rec.py}
    \item SASRec: \url{https://github.com/pmixer/SASRec.pytorch}
    \item BERT4Rec: \url{https://github.com/RUCAIBox/RecBole/blob/master/recbole/model/sequential_recommender/bert4rec.py}
    \item TiSASRec: \url{https://github.com/pmixer/TiSASRec.pytorch}
    \item MEANTIME: \url{https://github.com/SungMinCho/MEANTIME}
    \item DreamRec: \url{https://github.com/YangZhengyi98/DreamRec}
    \item DiffuRec: \url{https://github.com/WHUIR/DiffuRec}
    \item PreferDiff: \url{https://github.com/lswhim/PreferDiff}
\end{itemize}

\section{Hyper-parameters Settings}
\label{app:hp}
To ensure fair and consistent evaluation, we conducted extensive hyperparameter tuning based on empirical performance. The coefficient \( \lambda \) controls the trade-off between learning to generate item embeddings and capturing user preference, and is used in both PreferDiff and \modelname. The parameter \( w \) determines the strength of the personalized guidance signal during the inference stage, and is applied in DreamRec, PreferDiff, and \modelname. Following the previous work~\cite{liu2024preference}, we search \( \lambda \) from the set \( \{0.1, \dots, 0.9\} \), and \( w \) from \( \{0,\dots,9\} \). The best-performing configurations for each dataset are as follows: Beauty uses \( \lambda = 0.4 \), \( w = 8 \); Sports uses \( \lambda = 0.4 \), \( w = 2 \); Toys uses \( \lambda = 0.6 \), \( w = 6 \); ML-10M uses \( \lambda = 0.8 \), \( w = 4 \); and ML-20M uses \( \lambda = 0.7 \), \( w = 5 \).

For time encoding methods based on Gaussian and RFF, we additionally tune the frequency scale parameter \( \sigma \), selected from the candidate set \( \{1, 0.5, 0.1, 0.05, 0.01, 0.001\} \). We find that Gaussian embedding achieves the best performance with \( \sigma = 0.05 \), while RFF embedding performs best with \( \sigma = 1.0 \).

We search \( \gamma \) and \( \eta \) from the set \( \{0.1, 0.2, \dots, 1.0\} \). The coefficient \( \gamma \) balances the contribution of the predicted ToI information to the user representation using a residual connection. The coefficient \( \eta \in [0,1] \) balances the model's ability to denoise and reconstruct the target IoI embedding and to predict the potential ToI embedding. The best-performing configurations for each dataset under the LOO partition are as follows: Beauty uses \( \gamma = 0.8 \), \( \eta = 0.2 \); Sports uses \( \gamma = 0.4 \), \( \eta = 0.8 \); Toys uses \( \gamma = 0.3 \), \( \eta = 0.5 \); ML-10M uses \( \gamma = 0.6 \), \( \eta = 0.2 \); and ML-20M uses \( \gamma = 0.3 \), \( \eta = 0.5 \). And under 8:1:1 temporal splitting partition are as follows: Beauty uses \( \gamma = 0.3 \), \( \eta = 0.1 \); Sports uses \( \gamma = 1.0 \), \( \eta = 0.8 \); Toys uses \( \gamma = 0.1 \), \( \eta = 0.9 \); ML-10M uses \( \gamma = 0.7 \), \( \eta = 0.2 \); and ML-20M uses \( \gamma = 0.4 \), \( \eta = 0.9 \).

\section{Algorithm}
\label{app:alg}
We present the algorithms for the training and inference processes of \modelname\ as follows:
\begin{algorithm}[hbp]
\caption{Training Algorithm of \modelname}
\label{alg:tagdiff_train}
\begin{algorithmic}[1]
\Require Historical interaction sequence $s_u$, interacted time $t_u$, target IoI $e_n$, Time Encoding function $TE$, Transformer Encoder $TR$, hyperparameters $\gamma,\eta$, learning rate $lr$, variance schedule $\{\alpha_t\}_{t=1}^T$
\Ensure Optimal denoising model $p_\theta(\cdot)$ and optimal ToI prediction module $M_\varphi(\cdot)$
\Repeat
    \State $t \sim \{1, \dots, T\}$, $\epsilon \sim \mathcal{N}(0, I)$ 
    \State $e_n^t = \sqrt{\bar{\alpha}_t} e_n + \sqrt{1 - \bar{\alpha}_t} \, \epsilon$ \Comment{Add Gaussian Noise}
    \State $\tilde{s}_u = s_u + TE(t_u)$ \Comment{Encode timestamps into sequence}
    \State $g_u = TR(\tilde{s}_u)$ \Comment{Extract user representation}
    \State $\hat\tau_{n} = M_\varphi(g_u, \tau_{n-1})$ \Comment{Predict next ToI embedding}
    \State $g_u' = (1 - \gamma)\,g_u + \gamma \cdot \mathrm{MLP}(g_u, \hat\tau_{n})$ 
    \State $\mathcal{L}_{\text{ToI}}, \mathcal{L}_{\text{IoI}} \gets$ compute via Equations~\eqref{equation_Ltime} and~\eqref{equation_LDiff}
    \State $\mathcal{L}_{\text{\modelname}} = \eta \mathcal{L}_{\text{IoI}} + (1 - \eta)\mathcal{L}_{\text{ToI}}$
    \State $\theta \gets \theta - lr \nabla_\theta \mathcal{L}_{\text{\modelname}}$, \quad 
       $\varphi \gets \varphi - lr \nabla_\varphi \mathcal{L}_{\text{\modelname}}$
\Until{converged}
\end{algorithmic}
\end{algorithm}

\begin{algorithm}[hbp]
\caption{Inference Algorithm of \modelname}
\label{alg:tagdiff_infer}
\begin{algorithmic}[1]
\Require Historical interaction sequence $s_u$, interacted time $t_u$, Time Encoding function $TE$, Transformer Encoder $TR$, hyperparameters $\{\gamma, w\}$, variance schedule $\{\alpha_t\}_{t=1}^T$, optimal denoising model $p_\theta(\cdot)$, optimal ToI prediction module $M_\varphi(\cdot)$, dummy token $\Phi$
\Ensure Predicted target IoI $\hat{e}_n^{0}$
\State $e_n^T \sim \mathcal{N}(0, I)$ \Comment{Initialize Gaussian noise}
\State $\tilde{s}_u = s_u + TE(t_u)$ \Comment{Encode timestamps into sequence}
\State $g_u = TR(\tilde{s}_u)$ \Comment{Extract user representation}
\State $\hat\tau_{n} = M_\varphi(g_u, \tau_{n-1})$ \Comment{Predict next ToI embedding}
\State $g_u' = (1 - \gamma)\,g_u + \gamma \cdot \mathrm{MLP}(g_u, \hat\tau_{n})$ \Comment{Enrich user representation}
\For{$t = T$ \textbf{down to} $1$}
    \State $\hat{e}_n^{0} = (1 + w) \cdot p_\theta(e_n^{t}, t, g_u') - w \cdot p_\theta(e_n^{t}, t, \Phi)$
    \State $e_n^{t-1} = \sqrt{\bar{\alpha}_{t-1}} \, \hat{e}_n^{0} 
        + \sqrt{1 - \bar{\alpha}_{t-1}} \, 
        \frac{e_n^{t} - \sqrt{\bar{\alpha}_t} \, \hat{e}_n^{0}}{\sqrt{1 - \bar{\alpha}_t}}$
\EndFor
\State \Return $\hat{e}_n^{0}$
\end{algorithmic}
\end{algorithm}

\end{document}